\newcommand{\myparagraph}[1]{\vspace{0.5\baselineskip}\noindent{\textbf{#1.}}~}
\newcommand{\Baidu}{{\footnotesize{\textsf{Baidu-dataset}}}\xspace}
\newcommand{\BTAA}{{\footnotesize{\textsf{BTAA-dataset}}}\xspace}
\newcommand{\NYU}{{\footnotesize{\textsf{NYU-dataset}}}\xspace}
\newcommand{\Transit}{{\footnotesize{\textsf{Transit-dataset}}}\xspace}
\newcommand{\UMN}{{\footnotesize{\textsf{UMN-dataset}}}\xspace}
\newcommand{\quadtree}{{\small{\textsf{QuadTree}}}\xspace}
\newcommand{\Rtree}{{\small{\textsf{Rtree}}}\xspace}
\newcommand{\unifiedIndex}{{\small{\textsf{DITS}}}\xspace}
\newcommand{\localIndex}{{\small{\textsf{\mbox{DITS-L}}}}\xspace}
\newcommand{\globalIndex}{{\small{\textsf{\mbox{DITS-G}}}}\xspace}
\newcommand{\overlapSearch}{{\small{\textsf{OverlapSearch}}}\xspace}
\newcommand{\coverageSearch}{{\small{\textsf{CoverageSearch}}}\xspace}
\newcommand{\STS}{{\small{\textsf{STS3}}}\xspace}
\newcommand{\Josie}{{\small{\textsf{Josie}}}\xspace}
\newcommand{\SG}{{\small{\textsf{SG}}}\xspace}
\newcommand{\MIQ}{{{\small\textbf{\textrm{OJSP}}}}\xspace}
\newcommand{\MCQC}{{{\small\textbf{\textrm{CJSP}}}}\xspace}
\newcommand{\MCP}{{{\small\textrm{MCP}}}\xspace}
\newtheorem{example1}{Example}
\newtheorem{theorem}{Theorem}
\newtheorem{definition1}{Definition}
\newtheorem{lemma1}{Lemma}
\renewcommand{\@thesubfigure}{\hskip\subfiglabelskip}
\definecolor[named]{ACMPurple}{cmyk}{0.55,1,0,0.15}
\definecolor{Maroon}{cmyk}{0, 0.87, 0.68, 0.32}
\definecolor{Brown}{rgb}{0.59, 0.29, 0.0}
\definecolor{Green}{rgb}{0,1,0}
\definecolor{NavyBlue}{rgb}{0.0, 0.0, 0.5}
\definecolor{Green}{rgb}{0,1,0}
\newcommand\blfootnote[1]{%
  \begingroup
  \renewcommand\thefootnote{}\footnote{#1}%
  \addtocounter{footnote}{-1}%
  \endgroup
}
\definecolor{issuecolor}{RGB}{0,166,81}
\newcounter{cN}
\let\oldmarginnote\marginnote
\renewcommand*{\marginnote}[1]{%
	\begingroup%
	\ifodd\value{page}
	\if@firstcolumn\normalmarginpar\fi
	\else
	\if@firstcolumn\else\normalmarginpar\fi
	\fi
\textbf{}	\oldmarginnote{\textcolor{Brown}{#1}}%
		\oldmarginnote{}
	\endgroup%
}
\def\BibTeX{{\rm B\kern-.05em{\sc i\kern-.025em b}\kern-.08em
    T\kern-.1667em\lower.7ex\hbox{E}\kern-.125emX}}
\begin{document}

\title{Joinable Search over Multi-source Spatial Datasets: Overlap, Coverage, and Efficiency
}

\author{\IEEEauthorblockN{Wenzhe Yang\IEEEauthorrefmark{2}, Sheng Wang\IEEEauthorrefmark{2}\IEEEauthorrefmark{1}, Zhiyu Chen\IEEEauthorrefmark{3}, Yuan Sun\IEEEauthorrefmark{4}, and Zhiyong Peng\IEEEauthorrefmark{2}\IEEEauthorrefmark{5}\IEEEauthorrefmark{1}}
\IEEEauthorblockA{\IEEEauthorrefmark{2}School of Computer Science, Wuhan University \;  
\IEEEauthorrefmark{3}Amazon.com, Inc. \\ 
\IEEEauthorrefmark{4}La Trobe Business School, La Trobe University \; \IEEEauthorrefmark{5}Big Data Institute,Wuhan University \\ }
\IEEEauthorblockA{[wenzheyang, swangcs, peng]@whu.edu.cn, zhiyuche@amazon.com, yuan.sun@latrobe.edu.au}
}

\maketitle

\begin{abstract}
The search for joinable data is pivotal for numerous applications, such as data integration, data augmentation, and data analysis. Although there have been many successful joinable search studies for table discovery, the study of finding joinable spatial datasets for a given query from multiple spatial data sources has not been well considered.
This paper studies two cases of joinable search problems from multiple spatial data sources. In addition to the overlap joinable search problem (\MIQ), we also propose a novel coverage joinable search problem (\MCQC) that has not been considered before, motivated by many real-world applications in the field of spatial search. 
To support two cases of joinable search over multiple spatial data sources seamlessly, we propose a multi-source spatial dataset search framework. Firstly, we design a 
\underline{\textbf{DI}}stributed 
\underline{\textbf{T}}ree-based \underline{\textbf{S}}patial index structure called \unifiedIndex, which is used not only to design acceleration strategies to speed up joinable searches, but also to support efficient communication between multiple data sources. Additionally, we prove that the \MCQC is NP-hard and design a greedy approximate algorithm to solve the problem. 
We evaluate the efficiency of our search framework on five real-world data sources, and the experimental results show that our framework can significantly reduce running time and communication costs compared with baselines.

\end{abstract}

\begin{IEEEkeywords}
Spatial dataset search, Overlap joinable search, Coverage joinable search, Indexing, NP-hard.
\end{IEEEkeywords}

\section{Introduction}
\label{sec:intro}

Spatial data is a critical component of real-world data~\cite{Yang2022,FangMS2023,ChowdhuryMS22,LiuSC21}. Efficient spatial queries have become increasingly important in spatial data studies~\cite{Wu2021,Vu2020,Zacharatou2019,kalamatianos2021}. For example, the spatial join focuses on finding all matched pairs of records that satisfy some specified join predicate for two given spatial datasets~\cite{Qiao2020,Zacharatou2019,ChenCCT15}. Unfortunately, the traditional join operations require users to specify the datasets to be joined, posing challenges for junior users who are unfamiliar with the data source's data. Thus, the joinable table search ~\cite{zhu2019,Dong2021,ZhangYi2020,DengDong2017} has been proposed to identify the tables that can be joined with the query table, which is a key procedure in subsequent data analysis~\cite{Dong2023,DengCCYCYSWLCJZJZWYWT24}. However, the joinable search over spatial datasets has not been well-studied.

\blfootnote{$^{\ast}$ Sheng Wang and Zhiyong Peng are the corresponding authors.}
\blfootnote{\IEEEauthorrefmark{3} Work does not relate to the position at Amazon.}
Especially nowadays, massive spatial data often come from different data sources and are located in different regions. Many open-source data platforms, such as OpenGeoMetadata~\cite{OpenGeoMetadata}, GeoBlacklight~\cite{GeoBlacklight}, OpenGeoHub~\cite{OpenGeoHub}, etc., integrate spatial data from multiple open data sources. Also, individual companies or organizations are independent in managing datasets and have a wealth of spatial data stored in their own data sources. It is crucial to break the barriers between different data sources to facilitate joinable search over different spatial data sources. Motivated by this, we focus on the joinable search problem over multiple spatial data sources. In what follows, we will use a municipal planning example to introduce two types of joinable searches.

\begin{figure}
\setlength{\abovecaptionskip}{-0.1 cm} 
\setlength{\belowcaptionskip}{0 cm} 
\vspace{-0.5em}
\centering

\includegraphics[width=9.0cm]{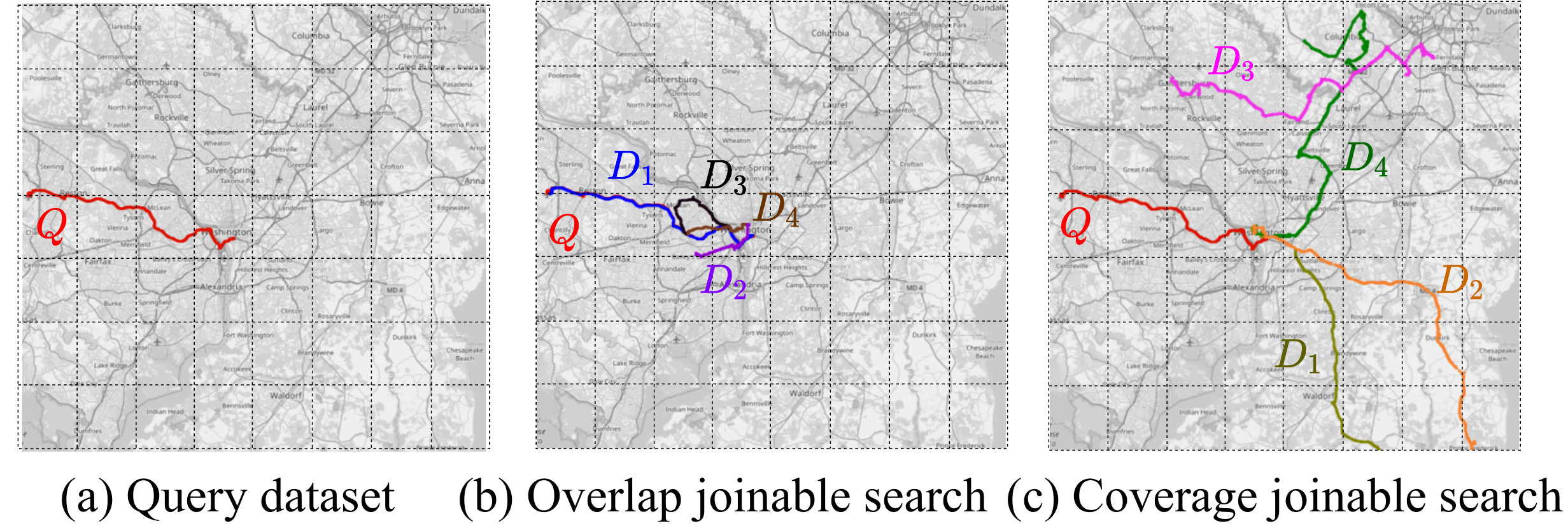}
\caption{An example of performing overlap and coverage joinable search on real multi-source transit datasets from Washington, D.C. and Maryland.
} \label{fig:intro}

\end{figure}

\begin{example1}
\label{example1}
In municipal planning, planners need to find datasets for analyzing traffic patterns and building a sound transportation system. There are two main tasks: (1) as shown in Fig.~\ref{fig:intro}(a), a user inputs a query dataset $Q$ consisting of a set of spatial points in Washington, D.C. 
The user specifies the longitude and latitude as the join columns and wants to find 4 datasets that have the maximum spatial overlap with $Q$. The results are shown in Fig.~\ref{fig:intro}(b). 
On this basis, we can utilize the found joinable datasets and $Q$ for subsequent data analysis, such as trajectory near-duplicate detection~\cite{XiaoWLYW11} and traffic congestion prediction~\cite{ZhengZYSZ14,ZhouTWN13}.
We define this type of search as an \textbf{overlap joinable search}; (2) the user also wants to find 4 datasets that have maximum spatial coverage after connecting to with $Q$, which can help construct transfer routes and meet more travel needs~\cite{Ali2018,He2022}. 
It is inappropriate for users to transfer for too long when switching transportation. Thus, a natural constraint is to ensure connectivity between different routes. As shown in Fig.~\ref{fig:intro}(c), the user can find 4 datasets of routes in Maryland that are connected to the query and cover larger regions.
We define this type of search operation as a \textbf{coverage joinable search}.
\end{example1}

\noindent\textbf{What is Overlap Joinable Search?} The overlap joinable search is similar to the overlap set similarity search problem~\cite{zhu2019,ZhangYi2020,ZhuEK2016}, whose goal is to find $k$ sets that have the maximum intersection size with the query set~\cite{Qiao2020,ZhangYG2017, Castelo2021}. Unfortunately, the spatial dataset is a set of points with longitude and latitude. The mere comparison of the numerical values of two coordinates is insufficient in determining whether the two points overlap or match, as geospatial data typically exhibits a certain degree of precision error in the decimal places. 

To facilitate the calculation of the intersection number of the spatial datasets, we partition the space into uniform cells (as shown in the dashed cells of Fig.~\ref{fig:intro}(a)). The size of the set intersection between two spatial datasets is the number of overlapping cells. Fig.~\ref{fig:intro}(b) shows 4 routes in Washington, D.C. with the maximum overlapping with $Q$. We can see that the query $Q$ and the route $D_1$ are highly overlapping.





\noindent\textbf{What is Coverage Joinable Search?}
The goal of coverage joinable search is to find $k$ datasets that have the maximum coverage area with the query dataset.
When performing the coverage joinable search, as points have no physical dimensions (length and width), it is difficult to measure the coverage area of the spatial dataset. Thus, we can define the number of cells containing these spatial points as the coverage of the spatial dataset. Additionally, spatial connectivity~\cite{WangTT2022,wang2021public,ChenChen2018} is a key constraint in the coverage joinable search, ensuring the relevance between the results with the query. Motivated by the above example, a general definition of spatial connectivity is necessary. 


The spatial connectivity we define requires that any result dataset be directly or indirectly connected with the query, where a dataset is directly connected to a query if two spatial datasets satisfy the given criteria (e.g., there exists at least one overlapped cell), and a dataset is indirectly connected to a query if one or more intermediate datasets are directly connected to the query as a medium. For example, in Fig.~\ref{fig:intro}(c), for the query $Q$ in Washington, D.C., we can find 4 routes in Maryland with the maximum coverage and satisfy the spatial connectivity, where $Q$ and $D_4$ are directly connected since they have an overlapped cell, while $Q$ and $D_3$ are indirectly connected since they can establish a connection through the intermediary datasets $D_4$. 

\myparagraph{Two Problems to Solve}
In this paper, we take the first attempt to define two types of joinable search problems over spatial datasets to enrich a given query dataset in both depth (overlapping) and width (coverage) directions. The former one can be formulated as the \textit{overlap joinable search problem} (\textbf{\MIQ}), and the latter one as the \textit{coverage joinable search problem} (\textbf{\MCQC}). In addition, considering that the data is widely distributed on multiple data sources, it is not enough to only study the local joinable search. Thus, we study the multi-source dataset search that supports overlap and coverage joinable search operations.

\myparagraph{Challenges}
To solve the above problems, we face multiple key challenges: (1) a na\"{\i}ve way to support overlap and coverage joinable searches is to build individual indexes for each type of joinable search. However, this can be costly in terms of both storage space and index construction time. Hence, it is preferable to have an efficient index that supports both types of joinable searches; (2) data is often stored in multiple independent and autonomous data sources. Breaking the barriers between different spatial data sources to facilitate the joinable search is also a core challenge; (3) the scale of datasets in each data source is very large, so simply designing the index is not enough. Moreover, we also prove that \MCQC is NP-hard. Thus, it is urgent to design efficient pruning strategies combined with the index to accelerate the joinable search process over multiple data sources. 

\myparagraph{Contributions} To address the challenges above, we mainly make the following contributions:




\begin{itemize}
[leftmargin=*]
\item To support two types of joinable searches over multiple spatial data sources, we first formulate them into two problems called \MIQ and \MCQC, and prove the NP-hardness of \MCQC (see Section~\ref{sec:defnitions}). Then, we develop an efficient search framework to support \MIQ and \MCQC (see Section~\ref{sec:framework}).
\item We design a \underline{\textbf{DI}}stributed \underline{\textbf{T}}ree-based \underline{\textbf{S}}patial index structure \unifiedIndex, composed of local indices and a global index, which not only accelerates \MIQ and \MCQC locally, but also supports efficient communication between multiple data sources (see Section~\ref{sec:offline}).

\item Based on \unifiedIndex, we first design query distribution strategies to reduce communication costs. Moreover, we propose effective lower and upper bounds for \MIQ and \MCQC, and design an efficient exact algorithm and a greedy approximation algorithm to solve them, respectively (see Section~\ref{sec:static}).

\item We conduct extensive experiments on five real-world spatial data sources to evaluate the index construction, search performance, and communication cost of the multi-source joinable search framework. The experimental results show that our proposed framework is highly effective and efficient in solving \MIQ and \MCQC compared with the baselines (see Section~\ref{sec:exp}).
\end{itemize}

\vspace{-0.2cm}
\section{Related Work}
\label{sec:relatedWork}
There are three kinds of work closely related to the joinable search over spatial datasets, including spatial join operation~\cite{Jacox2007,QiBM13,BourosM2019,HjaltasonS98,NobariTHKBA13}, overlap set similarity search~\cite{zhu2019,ZhuEK2016,Peng2016,YangZhong2020}, and maximum coverage problem (\MCP)~\cite{Hochbaum1998,Vazirani2001,VandinUR2011,CohenK08}, wherein our coverage joinable search can be viewed as a variant of \MCP. In what follows, we will conduct a literature review in these three parts.



\myparagraph{Spatial Join} 
The spatial join operation is similar to the join operation in relational databases~\cite{HjaltasonS98,Zhu2005,Qiao2020,QiBM20,BourosM2019}, which is defined on two given sets of spatial objects (e.g., point, line, polygon), and finds all matched pairs of objects by their spatial relationships (most often intersection). For example, for two given spatial point sets, the spatial intersection join~\cite{BrinkhoffKS93,BourosM2019} aims to find all pairs of intersection points between the two datasets. In contrast, spatial joinable search identifies $k$ datasets from the data source that can be joined with the query dataset.

As the amount of data increased, the spatial join processing for distributed systems has attracted much attention~\cite{TsitsigkosBMT19}. For example, in the Hadoop-GIS system~\cite{Aji2013}, spatial joins are computed by dividing the space into a grid and the objects in each cell are stored locally at nodes in the HDFS. In the Spatial-Hadoop system~\cite{Eldawy2015}, a global index is stored at a master node, then the master node partitions all data into chunks and distributes the data partition to slave nodes. Moreover, implementations of the spatial join using Spark~\cite{Xie2016,You2015,Yu2015}, where the data is shared in the memories of all nodes. However, the existing frameworks only support the traditional spatial join operation, which cannot be used to accelerate and solve the overlap and coverage joinable search proposed in this paper simultaneously.

\myparagraph{Overlap Set Similarity Search}
The goal of overlapping set similarity search is to find $k$ sets that have the maximum set intersection size with the query~\cite{zhu2019,ZhangYi2020,ZhuEK2016,XiaoC2008,Shrivastava2015}. 
Currently, the existing set similarity search studies are mainly divided into exact search~\cite{zhu2019,Peng2016,bayardo2007scaling, XiaoC2008} and approximate search~\cite{ZhuEK2016,FernandezMNM19,Dong2023}.

For example, fuzzy join~\cite{ChenWNC2019,ChaudhuriGGM03, WangLF14,DengKMS17,Zeakis0SPK22} is a powerful operator used in set matching that allows records to match approximately~\cite{YuLDF2016}. 
An early solution is FastJoin~\cite{WangLF14}, which generates the signature for each set and finds similar pairs of records. SilkMoth~\cite{DengKMS17} optimizes this signature scheme, reducing the number of used tokens to generate fewer candidates. Zeakis et al.~\cite{Zeakis0SPK22} proposed a token-based novel filtering technique to improve efficiency. For spatial datasets, the coordinates can form a multi-column composite key theoretically~\cite{ZhangHOPS10,SantosBCMF21} to perform the approximate matching. Nevertheless, the existing fuzzy join studies~\cite{ChenWNC2019,ChaudhuriGGM03, WangLF14,DengKMS17,Zeakis0SPK22} mainly focus on finding approximate matching pairs in two given sets. There is still a lack of efficient indexes to accelerate the computation of fuzzy matching between a query and plenty of candidate datasets.

In this paper, we focus on the exact solution. 
For example, Peng et al.~\cite{Peng2016} transformed time series datasets into a set of cells and measured the similarity based on the Jaccard index. However, it requires scanning all datasets and estimating the number of set intersections, where pairwise comparisons are time-consuming. 
Zhu et al.~\cite{zhu2019} proposed an exact algorithm called Josie to accelerate the set intersection computation. However, Josie inherits the limitation of the prefix filter, whose performance highly depends on the data distribution and yields a worst-case time complexity. In addition, unlike table data, spatial datasets are widely distributed in space. Thus, we need to design an effective index and search algorithm for the characteristics of spatial datasets.

\myparagraph{Maximum Coverage Problem}
The maximum coverage problem (\MCP) is a classical problem in combinatorics~\cite{Hochbaum1998}. Given a universe of elements $\mathcal{U} = \{e_1, e_2,\dots, e_n\}$ and a collection of sets $\mathcal{S} = \{S_1, S_2,\dots, S_m\}$, where each $S_i$ is a subset of $\mathcal{U}$, the objective of \MCP is to select a fixed number ($k$) of sets, such that the total number of elements covered by the selected sets is maximized. 
Over the past two decades, \MCP and its variants have been extensively studied, such as the budgeted \MCP~\cite{KhullerMN99}, the weighted \MCP~\cite{Vazirani2001}, the connected \MCP~\cite{VandinUR2011}, and the generalized \MCP~\cite{CohenK08}. These variants share the common objective of selecting sets to maximize the total weight of elements in the union set.

 However, it is worth noting that our \MCQC has a unique challenge: finding the datasets that are directly or indirectly connected to the query and maximizing the union of elements contained by the found datasets and query. The existing solutions are designed for different constraints and cannot be used directly to solve our problem. Secondly, among the existing solutions, greedy algorithms are one of the most efficient ways. However, they need to traverse all candidate datasets and perform the exact computation, which is very time-consuming. Therefore, efficient indexes are urgently needed to speed up the search process.

\section{Definitions}
\label{sec:defnitions}


In this section, we introduce the data modelling and formalize problems of overlap and coverage joinable search. Frequently used notations are summarized in Appendix~\ref{appendix:notations}.
%




\subsection{\textbf{Data Modelling}}
\begin{definition1}
\textbf{(Spatial Point)} A spatial point $p=(x, y)$ is a 2-dimension vector, which has a longitude $x$ and a latitude $y$ {(e.g., p=(116.36422$^\circ$, 39.88781$^\circ$)).}
\end{definition1}

\begin{definition1}
\textbf{(Spatial Dataset)} A spatial dataset $D$ contains a set of $2$-dimension points, $i.e., D= \{p_1, p_2, \cdots, p_{|D|}\}$, where $|D|$ is the size of the dataset $D$.
\end{definition1}

\begin{definition1}
\textbf{(Spatial Data Source)} A spatial data source $\mathcal{D}$ contains a set of datasets, i.e., $\mathcal{D} = \{D_1, D_2, \dots, D_n\}$, where $n$ denotes the number of datasets in the data source.
\end{definition1}



As we introduced in Section~\ref{sec:intro}, for ease of quantifying the degree of overlap between the two datasets and the coverage of the datasets, we used a grid partitioning method to represent the spatial dataset, which is widely used in large spatial data processing~\cite{Aji2013,Eldawy2015,Zhang2009,li2024privacy}.  

\begin{definition1}
\label{def:cell}
\textbf{(Grid and Cell)} A 2-dimensional space can be divided into a grid $\mathcal{C}_{\theta}$ consisting of $2^\theta \times 2^\theta$ cells, where $\theta$ called resolution. 
Each cell $c$ denotes a unit space in the grid $\mathcal{C}_{\theta}$, and its coordinates $(X, Y)$ can be converted into a non-negative integer to uniquely indicate $c$ using the z-order curve space-filling method~\cite{Peng2016, Yang2022, SieranojaF18}, denoted as $z(X, Y) = c$. The coordinates are transformed into cell (integer) IDs which are consecutive and form the range $[0, 2^\theta \times 2^\theta-1]$. 
\end{definition1}
Based on the grid partition, each spatial dataset can be represented as a finite set consisting of a set of cell IDs. 

\begin{definition1}
\label{def:spatialSet}
\textbf{(Cell-based Dataset)} 
Given a 2-dimensional space and a grid $\mathcal{C}_{\theta}$, a cell-based dataset $S_{D,\mathcal{C}_{\theta}}$ of the dataset $D$ is a set of cell IDs where 
each cell ID denotes that there exists at least one point $(x, y)\in D$ falling into the cell $c$ of the grid $\mathcal{C}_{\theta}$. The coordinates $(X, Y)$ of cell $c$ are $(\frac{x-x_0}{\nu}, \frac{y-y_0}{\mu})$, where $(x_0,y_0)$ represents the coordinate of the bottom-left point in the whole 2-dimensional space, and $\nu$ and $\mu$ denote the width and height of the cell $c$. Thus, the cell-based dataset is represented as $S_{D,\mathcal{C}_{\theta}} = \{ z(\frac{x-x_0}{\nu}, \frac{y-y_0}{\mu}) | (x,y) \in D \}$. For ease of presentation, we omit the subscript $\mathcal{C}_{\theta}$ of $S$ when the context is clear.
 


\end{definition1}




\begin{example1}
\label{example:raster1}
Figs.~\ref{fig:rasterPreprocessing}(a) and (b) show the process of generating the cell-based dataset by using the grid partition method. 
As shown in Fig.~\ref{fig:rasterPreprocessing}(a), the size of data space containing all datasets is $h\times w$. When performing the grid partitioning, we divide the whole space into a $2^{\theta} \times 2^{\theta}$ grid ($\theta=2$). 
Each cell in the grid can be represented by an integer by interleaving the binary representations of its coordinate values, e.g., for the bottom left cell, its ID is $0$, which is transformed from its coordinates (0, 0). 
Thus, each spatial dataset $D$ can be represented as a finite set $S_D$ consisting of a sequence of cell IDs. 
As shown in Fig.~\ref{fig:rasterPreprocessing}(b), three cell-based datasets of $D_1$, $D_2$ and $D_3$ is $S_{D_1}= \{9,11\}$, $S_{D_2}= \{1, 3\}$ and $S_{D_3}= \{12, 13\}$. The spatial coverage of $S_D$ is the number of IDs in the set.
\end{example1}




\begin{figure}[t]
\setlength{\abovecaptionskip}{-0.2 cm}
\setlength{\belowcaptionskip}{0cm}
\centering
\hspace{-0.4cm}
\includegraphics[width=9.1cm, height=3.2cm]{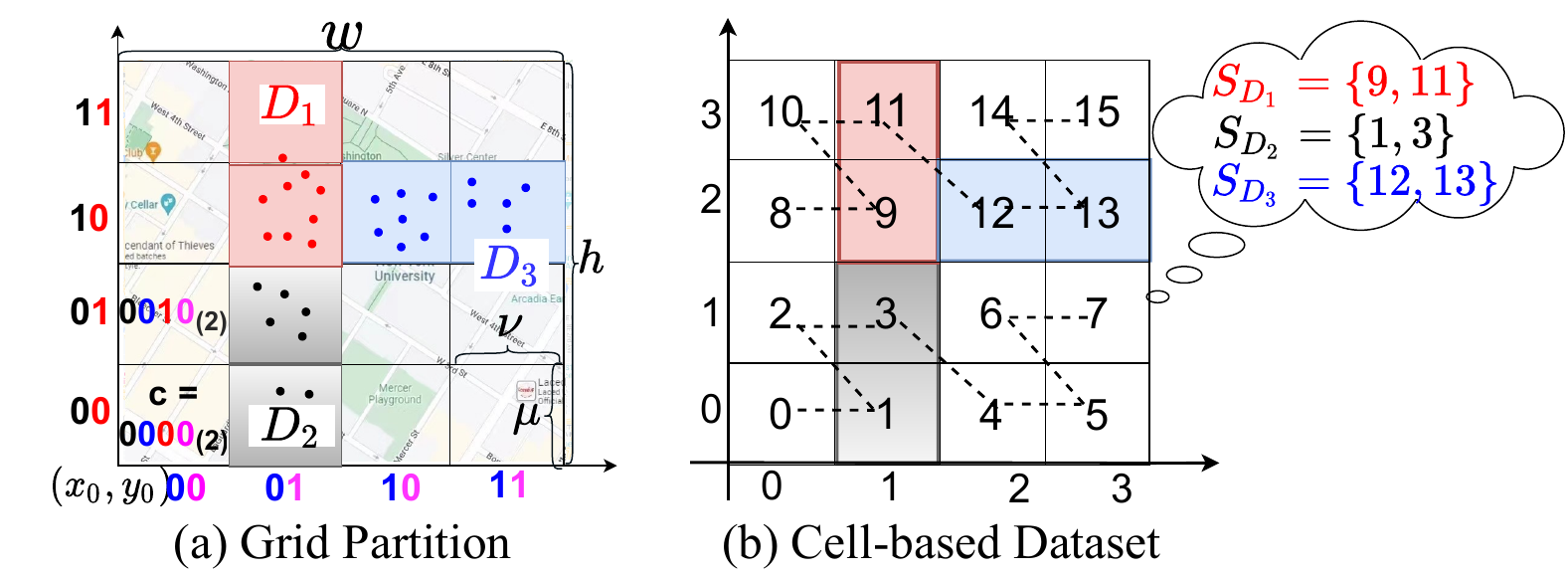}
\caption{Illustration of the grid partition and cell-based dataset.}
\label{fig:rasterPreprocessing}
\end{figure}


Spatial connectivity is a crucial factor in finding relevant datasets, which has also been considered before in trajectory studies~\cite{WangTT2022,wang2021public,Cao2021}, such as the coincidence of the endpoints of trajectories~\cite{WangTT2022}. Similarly, to find a set of datasets that is relevant to the query, we first define a general distance measure to quantify the relationship between two spatial datasets, and then establish criteria defining the connectivity relationships among the datasets.

\begin{definition1}
\label{defi:dist}
\textbf{(Cell-based Dataset Distance)} Given two cell-based datasets $S_D$ and $S_{D'}$, the distance between $S_D$ and $S_{D'}$ is defined as the distance between two nearest cells, i.e.,
\begin{small}
\begin{equation}
    dist(S_D,S_{D'}) = \min_{c_i \in S_{D}, c_j \in S_{D'}}\{ ||c_i, c_j||_2\}.
\end{equation}
\end{small}
where $c_i$ and $c_j$ denote cell IDs, which can be decomposed into coordinates of the cells in the grid, $||\:||_2$ is the Euclidean distance between the coordinates of two cells. 
\end{definition1}

\begin{definition1}
\label{defi:strongConnect}
\textbf{(Directly Connected Relation)} Given a distance threshold $\delta$, we say two cell-based  datasets $S_D$ and $S_{D'}$ are directly connected if $dist(S_D, S_{D'}) \leq \delta$. 
\end{definition1}

\begin{definition1}
\label{defi:weakConnect}
\textbf{(Indirectly Connected Relation)} Two cell-based
datasets $S_D$ and $S_{D'}$ are indirectly connected if exists one or multiple ordered datasets $\{S_{D_1}, \dots, S_{D_i}\} (i \geq 1)$ such that each pair of adjacent datasets (e.g., ($S_{D}, S_{D_1}$), \dots, ($S_{D_i}, S_{D'}$)) are directly connected.
\end{definition1}

\begin{definition1}
\label{defi:connectivity}
\textbf{(Spatial Connectivity)} Given a collection of cell-based datasets $\mathcal{S}_{\mathcal{D}}=\{S_{D_1}, \ldots, S_{D_{|\mathcal{D}|}}\}$, the collection $\mathcal{S}_{\mathcal{D}}$ satisfies the spatial connectivity if and only if any pair of datasets $S_{D_i}$ and $S_{D_j}$ in $\mathcal{S}_{\mathcal{D}} (i \neq j)$ are either directly or indirectly connected relation.
\end{definition1}

\begin{example1}
Following Example~\ref{example:raster1}, we can compute the distance $dist(S_{D_1}, S_{D_2}) = 1$, $dist(S_{D_1}, S_{D_3}) = 1$, and $dist(S_{D_2}, S_{D_3}) = \sqrt{2}$. For a given connectivity threshold $\delta = 1$, the dataset $S_{D_1}$ is directly connected to $S_{D_2}$ and $S_{D_3}$, the dataset $S_{D_2}$ is indirectly connected to $S_{D_3}$. Thus, $S_{D_1}$, $S_{D_2}$ and $S_{D_3}$ satisfy the spatial connectivity.
\end{example1}


\subsection{\textbf{Problem Definitions}}
We define two types of problems to find joinable datasets for the given query. The specific joinable search problems are as follows: 
\begin{definition1}
\textbf{(\underline{O}verlap \underline{J}oinable \underline{S}earch \underline{P}roblem (\textrm{\MIQ}))} Given a cell-based query dataset $S_Q$, a collection of cell-based datasets $\mathcal{S}_\mathcal{D} = \{S_{D_1}, \dots, S_{D_{|\mathcal{D}|}}\}$, and a positive integer $k$, \MIQ aims to find a subset $\mathcal{S}^* \subseteq \mathcal{S}_\mathcal{D}$ such that $|\mathcal{S}^*| \leq k$ and $\forall S_{D_i} \in \mathcal{S}^*$ and $\forall S_{D_j} \in \mathcal{S}_\mathcal{D}\backslash\mathcal{S}^*$, it always satisfies $|S_Q\cap S_{D_i}| \geq |S_Q\cap S_{D_j}|$.
\end{definition1}

\begin{definition1}\label{def:MCQC}
\textbf{(\underline{C}overage \underline{J}oinable \underline{S}earch \underline{P}roblem (\textrm{\MCQC}))}
Given a cell-based query dataset $S_Q$, a collection of cell-based datasets $\mathcal{S}_\mathcal{D} = \{S_{D_1}, \dots, S_{D_{|\mathcal{D}|}}\}$, and a positive integer $k$, \MCQC aims to find a subset $\mathcal{S}^* \subseteq\mathcal{S}_\mathcal{D}$ such that: 
\begin{small}
\begin{align}
      \mathcal{S}^* = & \; \arg\max_{\mathcal{S}^* \subseteq \mathcal{S}_\mathcal{D}} |S_Q \cup (\cup_{S_{D_i} \in \mathcal{S}^*} S_{D_i})|, \\ 
   s.t.  &  \;  \nonumber  |\mathcal{S}^*| \le k, \\
   & \; \nonumber \mathcal{S}^* \cup \{S_Q\} \text{ satisfy spatial connectivity}.  
\end{align}
\end{small}

\end{definition1}

\subsection{\textbf{NP-Hardness of \textrm{\MCQC}}}
\begin{lemma1}
\label{lemma:nphard}
\MCQC is NP-hard.
\end{lemma1}
We prove that \MCQC is NP-hard in Appendix~\ref{appendix:NPhard} by performing a reduction from the NP-hard maximum coverage problem, owing to page limitations.

\section{Overview of Our Search Framework}
\label{sec:framework}

\begin{figure}[t]
\setlength{\abovecaptionskip}{-0.2 cm}
\centering
\includegraphics[width=7.3cm]{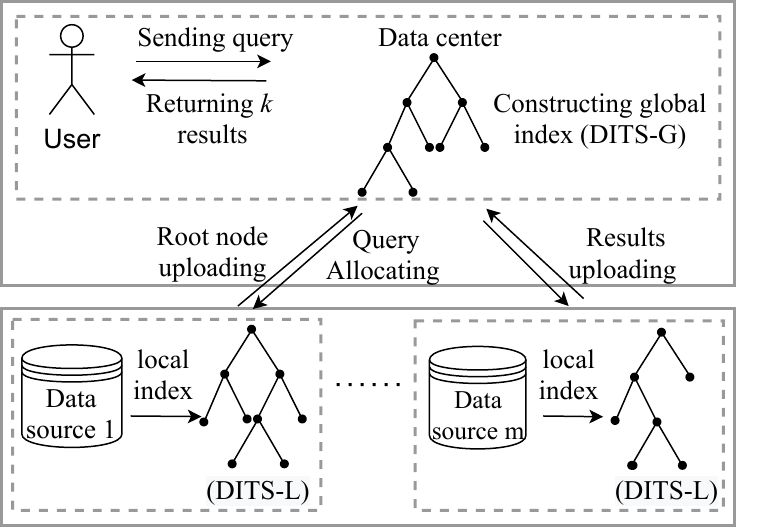}
\caption{Multi-source joinable spatial dataset search framework.}
\label{fig:MSDSFramework}
\end{figure}

In this paper, we design a joinable search framework over multiple spatial data sources, which not only effectively accelerates overlap and coverage joinable search simultaneously, but also provides efficient search over multiple independent data sources. As shown in Fig.~\ref{fig:MSDSFramework}, the search framework contains a data center and multiple independent data sources. 
Each data source organizes its data and constructs the local index. The data center receives the distribution information from data sources and maintains a global index. When the data center receives a query request, it first performs a global search and sends the query to the candidate data sources for local search. Then, each candidate data source sends the results back to the data center after completing the local search, and the data center finally performs the aggregation and returns the $k$ datasets to the user. The following provides a high-level description of index construction and search acceleration.


\myparagraph{Index Construction} 
We propose a \textbf{DI}stributed \textbf{T}ree-based \textbf{S}patial index structure (\unifiedIndex), to support joinable search over multiple data sources, where the global index, denoted as \globalIndex, constructed by the data center is used to find relevant data sources that contain possible results, and the local index, denoted as \localIndex, is used to find results in each data source.
Specifically, instead of indexing each dataset, our global index \globalIndex is created by recording the distribution information of data sources, which is mainly used to filter out the data sources irrelevant to the query. Moreover, we propose a novel local index \localIndex, which effectively combines the features of both the balltree and the inverted index. Compared with the inverted and tree indexes, our \localIndex can prune those irrelevant spatial datasets and quickly retrieve the results of two types of joinable search (see Section~\ref{sec:offline}).



\myparagraph{Accelerating Joinable Spatial Dataset Search}
We propose effective query distribution strategies and search algorithms to support two types of joinable searches. Firstly, to finish the multi-source joinable search, the process involves communication between the data source and the data center. Thus, we design two query distribution strategies to reduce communication costs by reducing the frequency of communication and the number of bytes transferred.
Moreover, for the overlap joinable search, we design an efficient filter-verification algorithm to solve \MIQ, where the filter step combines lower and upper bounds and branch-and-bound strategies to prune plenty of dissimilar pairs and get a set of candidates, and the verification step utilizes effective techniques to verify the candidates.
Furthermore, for the coverage joinable search, we propose a heuristic greedy algorithm to solve \MCQC (see Section~\ref{sec:static}).

\section{Index Construction}
\label{sec:offline}

\begin{figure*}
\setlength{\abovecaptionskip}{-0.2cm}\setlength{\belowcaptionskip}{-0.6 cm}
\centering
\includegraphics[width=17cm, height = 3.1cm]{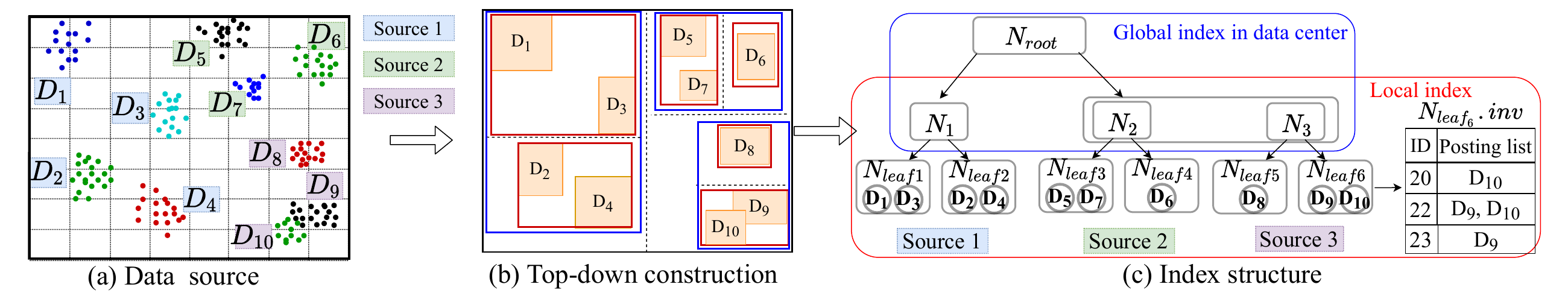}
\caption{An overview of our distributed tree-based spatial index structure.}
\label{fig:twoLevelIndex}
\vspace{-0.7cm}
\end{figure*}


In this section, we focus on constructing an efficient index to support two types of joinable searches over multiple spatial data sources. In our scenario, each data source is independent and autonomous. Thus, we propose a novel distributed tree-based spatial index structure (\unifiedIndex) for spatial joinable search, which is composed of one centralized global index and distributed local indices--one per data source. In what follows, we will introduce the detailed construction process of local and global indexes in Sections~\ref{sec:localTreeIndex} and \ref{sec:globalTreeIndex}, respectively. 


\vspace{-0.2cm}
\subsection{Local Index Construction}
\label{sec:localTreeIndex}
In this subsection, we describe the construction process of the local index. 
For each data source containing a collection of spatial datasets $\mathcal{D} = \{D_1, \dots, D_{n}\}$, we devise a local index to accelerate \MIQ and \MCQC (see Fig.~\ref{fig:twoLevelIndex}). For ease of understanding, we take a single data source as an example to illustrate the local index construction. Before constructing the local index, we first transform each spatial dataset $D$ into a dataset node $N_D$ containing specific information. The formal definition of a dataset node is presented below.

\begin{definition1}
\label{def:datasetNode}
\textbf{(Dataset Node)} A dataset node $N_D= (id, rect, o, r, S_D, pa)$, where $id$ is an dataset identifier, $rect$ is its minimum bounding rectangle (MBR), $o$ is the pivot of the MBR, $r$ is the radius of node, $S_D$ is its cell-based dataset, and $pa$ is the pointer address of the parent node.
\end{definition1}

The $rect$ is a rectangle whose sides are parallel to the $x$ and $y$ axes and minimally enclose all points in the dataset $D$. We compute the pivot $o$ by averaging the coordinates of the bottom left and upper right corners of the $rect$, and choose half of the farthest diagonal distance as the radius $r$. Next, we will give the formal definitions of the internal node and leaf node used in the local index.


\begin{definition1}
\label{def:iternalNode}
\textbf{(Internal Node)} An internal node $N_{in}$ = $(rect, o, r, N_l, N_r, pa)$, where $rect$ is the MBR enclosing all child nodes, $N_l$ and $N_r$ are its left and right child nodes. The meaning of the other symbols in the tuple is the same as in Definition~\ref{def:datasetNode}. The root node $N_{root}$ is an internal node, which has the same structure with $N_{in}$ except without $pa$.
\end{definition1}


\begin{definition1}
\textbf{(Leaf Node)} A leaf node $N_{leaf}$ = $(rect, o, r, ch, inv, f, pa)$, where $ch$ contains its all child dataset nodes, $inv$ is an inverted index of the contained dataset nodes, containing the posting lists that map the cell ID to a list of dataset IDs that contain it, and $f$ is the leaf node's capacity. The other symbols' meaning is the same as in Definition~\ref{def:iternalNode}.
\end{definition1}

\begin{algorithm}[t]
\small
\caption{$\texttt{LocalIndex}(\mathcal{N}_{\mathcal{D}}, pa, f, d)$}
\label{alg:indexConstructure}
\LinesNumbered 
\KwIn{$\mathcal{N}_{\mathcal{D}}$: a list of dataset nodes, 
$pa$: parent node,
  $f$: capacity,
  $d$: dimension}
\KwOut{$N_{root}$: the root node of local index}
   $N_{root} \leftarrow$ generate the root node of $\mathcal{N}_{\mathcal{D}}$ \;\label{ul:genNode2}
   $N_{root}.setParentNode(pa)$\;
   $max\_width \leftarrow -\infty$, $d_{split} \leftarrow 0$; \Comment{split dimension}
   \eIf{$|\mathcal{N}_\mathcal{D}| \leq f$}{ 
    \ForEach{$N_D\in \mathcal{N}_{\mathcal{D}}$}{
       $N_D.setParentNode(N_{root})$\;
       $N_{root}.ch.add(N_D)$\;}
       creating an inverted index for $N_{root}$\;
   }{
   $leftList, rightList\leftarrow \emptyset$;\quad\Comment{represent left and right subsets of $N_{root}$}\label{alg:nodeSpliting}
   \ForEach{$i\leftarrow 1:d$}{\label{line:alg1_traverseD_begin}
   \If{the width of $N_{root}.rect[i] > max\_width$}{
   $max\_width \leftarrow$ the width of $N_{root}.rect[i]$\;
   $d_{split} \leftarrow i$\;\label{line:alg1_traverseD_end}
   }   }
   \ForEach{$N_D\in \mathcal{N}_{\mathcal{D}}$}{ \label{ul:divide1}
   \eIf{$N_D.o[d_{split}] \le N_{root}.o[d_{split}]$}
   {$leftList.add(N_D)$\;}
   {$rightList.add(N_D)$\;}
   }\label{ul:divide2}
   $N_{root}.N_l \leftarrow \texttt{LocalIndex}(leftList, N_{root}, f, d)$\; \label{ul:recall1}
   $N_{root}.N_r \leftarrow \texttt{LocalIndex}(rightList, N_{root}, f, d)$\;\label{ul:recall2}
   
}
\Return $N_{root}$\;
\end{algorithm}


Firstly, we use $n$ to represent the number of datasets in the data source. When constructing the local index, the bottom-up construction needs to repeatedly find the two balls that make the parent node's MBR volume smallest, and insert the parent node into the tree. The total time complexity is $O(n^3)$~\cite{Omohundro1989b}. In contrast, for the top-down construction approach, the algorithm only needs to split the balls into two sets according to the coordinate of the dataset node in the selected dimension, which has a time complexity of $O(n\log{n})$. Thus, we take a top-down approach to construct the local index.

In Algorithm~\ref{alg:indexConstructure}, we show the whole construction process of the local index. Specifically, for a list of dataset nodes $\mathcal{N}_{\mathcal{D}}$ constructed from $\mathcal{D}$, we construct a node $N_{root}$ that contains all dataset nodes as the root node.
Then, we compute the range of coordinate values of the root node in each dimension. We choose the axis with the maximum width as the split dimension and calculate the median of pivot points of all dataset nodes on that dimension, dividing them into left and right subsets based on their position concerning the median (Lines \ref{ul:divide1} to  \ref{ul:divide2}). 
Next, we continue to build the sub-tree for each of the left and right sets of dataset nodes (Lines \ref{ul:recall1} to \ref{ul:recall2}). The construction process continues until the size of the contained dataset nodes does not exceed the leaf node capacity $f$. At this point, we create a leaf node $N_{leaf}$ containing an inverted index.  




\begin{example1}
As shown in Fig.~\ref{fig:twoLevelIndex}(a), there are ten datasets $\{D_1, \dots, D_{10}\}$ in three data sources, and the leaf node capacity $f$ is set to 2.  Then we create the local index for each data source. 
For example, for the 3-th data source, we first transform datasets $\{D_8, D_9, D_{10}\}$ into three dataset nodes in Fig.~\ref{fig:twoLevelIndex}(b). Then, we construct the root node $N_3$ that contains all dataset nodes. We can find the widest dimension and split the dataset nodes into two groups along the $x$-axis. 
Since the number of dataset nodes per group does not exceed $f$, the partitioning process is terminated. The final index structure is shown in the right figure of Fig.~\ref{fig:twoLevelIndex}(c), where the leaf node contains an inverted index that stores a mapping from the cell ID to all dataset IDs.
\end{example1}

\vspace{-0.1cm}
\subsection{Global Index Construction}\label{sec:globalTreeIndex}
When the search scenario includes multiple data sources, one popular solution is to apply a traditional distributed framework with a master-slave structure \cite{Yuan2019, Aji2013, Eldawy2015,ZhangAER2019}. It requires all datasets of slave nodes to be uploaded to the master, then the master node partitions all datasets into equal chunks and stores them separately in the slave node to build indexes of the same setting, such as node capacity, resolution, etc. The partition information held by the master node serves as the global index. However, in our search scenario, each data source is independent. Thus, we propose a global index based on the local indexes, which not only plays the role of an “overview” index but is also suitable for situations where each data source builds its index in different settings.

The global index contains root nodes from local indexes and is used when sending a query to candidate data sources.
Thus, after the local index is built, each source sends its root node $N_{root}$ to the data center. Then, the data center converts the pivot and MBR coordinates of $N_{root}$ into latitude and longitude to resolve the situation where the resolution of the data sources differs. Next, the data center generates the global root node containing the region of all local root nodes and chooses a dimension to split it into two child sets until it does not exceed the leaf node capacity $f$.
We use an example to illustrate the construction process of the global index.

%

\begin{example1}
As shown in Fig.~\ref{fig:twoLevelIndex}(c), each data source sends its root node to the data center. The data center then builds the global index from top to bottom based on the received node information, the construction process is similar to building the local index. The construction process
stops until each leaf node meets the leaf node's capacity, but there is no need to generate the inverted index for leaf nodes. 
\end{example1}

\vspace{-0.1cm}
To deal with the constant update of spatial datasets in each data source, we also design index updating strategies to avoid reconstructing the index. Due to the page limitation, the detailed index update strategies can be found in Appendix~\ref{appendix:indexUpdation}.
Additionally, we theoretically analyze the time and space complexity of constructing \unifiedIndex, which is $O(n\log{n})$ and $O(n)$, respectively. The detailed analysis is provided in Appendix~\ref{appendix:indexComplexity}.



\vspace{-0.1cm}
\section{Accelerating Joinable Search}
\label{sec:static}
In this paper, our goal is to find $k$ datasets from multiple data sources that have the maximum overlap or coverage with the given query dataset.
While the existing distributed queries mainly focus on the range query \cite{Aji2013, Eldawy2015}, k-nearest neighbor (kNN) \cite{Akdogan2010, Eldawy2015}, and approximate nearest neighbor (ANN) \cite{KimP23}, they cannot solve our problem. Thus, we first propose the query distribution strategy to support the efficient communication between the data center and each data source in Section~\ref{sec:queryDistribution}. Secondly, considering the large scale of datasets in each data source and the high computational complexity of the problem, we design efficient dataset search algorithms and acceleration strategies to solve \MIQ and \MCQC in Sections~\ref{sec:MIQsolution} and \ref{sec:MCQCsolution}.


%

\subsection{Query Distribution} 
\label{sec:queryDistribution}
Since the overall search process involves communication between the data source and the data center, as well as local search within the data source, communication cost, and search efficiency are two main bottlenecks in our search framework. Thus, we design two query distribution strategies to solve the above problems. The first strategy is to reduce the frequency of communication. Based on the global index, we can quickly find candidate data sources that may contain query results and transfer the query request only to them. 

Specifically, when a query comes in, we recursively search the global index starting from the root node. For a tree node $N$, we can prune the tree node $N$ directly if it does not overlap with the MBR of the query node ($N.rect \cap N_Q.rect = \emptyset$), or the shortest possible distance between it and the nearest neighbor cell of the query node is greater than the connectivity threshold ($dist(N.o, N_Q.o)-N.r-N_Q.r \geq \delta$); otherwise, we perform a depth-first traversal on it until it is a leaf node. All leaf nodes that intersect with $N_Q.rect$ or possibly connected to $N_Q$ form the candidate data sources. The data center transmits the query only to candidate sources, which reduces the communication cost of query distribution.

The second strategy is to reduce the number of bytes transferred during each communication between the data center and data sources. Because intersections only exist in the area where the MBR intersects, it is unnecessary to transmit the entire $S_Q$ to each candidate source when performing the static search. The data center only transfers the portion of the query that has an MBR intersect to the candidate source to implement the local search to further reduce the communication cost.

\subsection{Accelerating Overlap Joinable Search}
\label{sec:MIQsolution}
The overlap joinable search mainly focuses on finding $k$ datasets with the greatest number of intersections with the query. In this section, we propose an algorithm called \overlapSearch based on the local index to accelerate \MIQ. Since the intersection number of two datasets whose MBR does not intersect is 0, we can directly filter out those tree nodes that do not intersect with the query's MBR. For those nodes intersecting with the query node, we propose lower and upper bounds based on the leaf node's posting list.

\begin{lemma1}\label{lemma:UB}
\textbf{(Upper Bound)} Let $S_Q =\{c_1, c_2, \dots, c_n\}$ represent the cell-based set representation of a query node, the upper bound of intersection between leaf node $N_{leaf}$ and query node $N_Q$ is $\sum_{i=1}^n\varphi(c_i)$.
\begin{small}
$$\varphi(c_i) = \left\{ 
\begin{aligned}
1, & & if\ c_i \in N_{leaf}.inv, \\
0, & &  otherwise
\end{aligned}
\right.
$$
\end{small}
\end{lemma1}
\begin{proof}
As shown in Fig.~\ref{fig:threeBounds}, for a cell ID $c_i$ of $N_Q$, if it is also stored by the key set of the leaf node, which indicates that at least one dataset intersects with $N_Q$.
Therefore, the total number of cells that intersect with the key set of the inverted file is the upper bound between $N_Q$ and $N_{leaf}$.  
\end{proof}

\begin{lemma1}\label{lemma:LB}
\textbf{(Lower Bound)} The lower bound of intersection between $N_{leaf}$ and $N_Q$ is $\sum_{i=1}^n\varphi(c_i)$.
\begin{small}
    $$\varphi(c_i) = \left\{ 
\begin{aligned}
1, & & if\ c_i \in N_{leaf}.inv \& |c_i.pl| = |N_{leaf}.ch|, \\
0, & &  otherwise
\end{aligned}
\right.
$$
\end{small}
\end{lemma1}

\begin{figure}[t]
\setlength{\abovecaptionskip}{-0.2 cm}
\setlength{\belowcaptionskip}{-0.1 cm}
\centering
\hspace{-0.6cm}
\includegraphics[width=8cm]{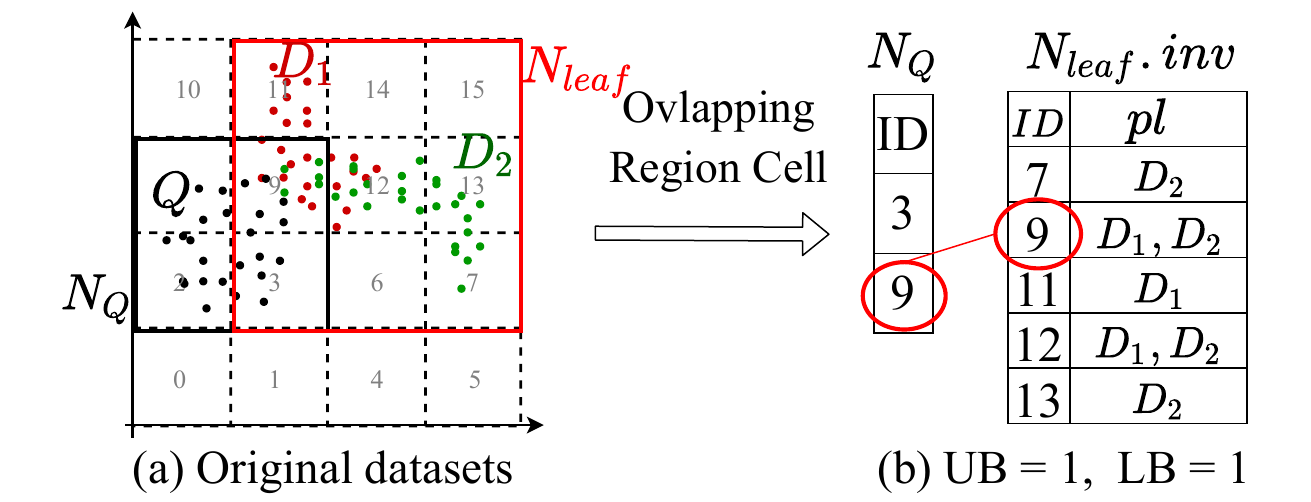}
\caption{Upper and lower bounds of the set intersection.}
\label{fig:threeBounds}
\end{figure}

\begin{proof}
From Fig.~\ref{fig:threeBounds} we can observe that the posting list of $N_{leaf}.inv$ stores the mapping from the cell ID to the spatial dataset. If the size of the $c_i.pl$ in $N_{leaf}.inv$ is equal to the current leaf node's capacity $|N_{leaf}.ch|$, then all dataset nodes contained by the $N_{leaf}$ must contain cell ID $c_i$. Therefore, the lower bound between the leaf node and the query node is the total number of cell IDs whose corresponding list size is equal to the current leaf node's capacity $|N_{leaf}.ch|$. 
\end{proof}

The pseudocode of \overlapSearch can be shown in Algorithm~\ref{alg:MIQSearch}.
Firstly, for a given query node $N_Q$, we implement a recursive search down \localIndex in which we prune away recursive contain internal nodes whose MBR does not overlap the query node (Lines~\ref{alg:NonOverlap1} to \ref{alg:rightNode}). 
Then, for each intersected leaf node $N$, we compute the upper and lower bounds between $N$ and $N_Q$. If the upper bound is less than the current lower bound in the priority queue, all dataset nodes contained by the leaf node can be safely pruned in batch; otherwise, it will become a candidate leaf node (Lines~\ref{alg:computeBound} to \ref{alg:insert}).

After obtaining the first-round candidate nodes, we traverse each candidate leaf node $N_{leaf}$ and compute the set intersection by scanning the posting lists of $N_{leaf}$.
When the size of results queue $\mathcal{R}.size < k$, the candidate dataset is directly inserted into $\mathcal{R}$; otherwise, we compare the set intersection of each candidate dataset with the $k$-th largest value in $\mathcal{R}$ and judge whether to insert it into $\mathcal{R}$. Finally, we can obtain the set of results $\mathcal{R}$ (Lines~\ref{alg:IBS1} to \ref{alg:IBS2}).  Additionally, \overlapSearch incurs $O(n)$ time complexity. The detailed time complexity analysis of the Algorithm~\ref{alg:MIQSearch} can be found in Appendix~\ref{appendix:indexComplexity}.
%

\begin{algorithm}[t]
\setlength{\dbltextfloatsep}{-5cm}
\small
\caption{$\texttt{OverlapSearch}(N_{root},N_Q, k)$}
\label{alg:MIQSearch}
\LinesNumbered 
\KwIn{$N_{root}$: root node of the local index, $N_Q$: query node, $k$: number of results}
\KwOut{$\mathcal{R}$: result queue}
$PQ, \mathcal{R}\leftarrow$ Initialize two priority queues\;
\texttt{BranchAndBound}($N_{root}, N_Q$, $PQ$)\;
\ForEach{$N_{leaf} \in PQ$}{ \label{alg:IBS1}
Compute the exact intersection of all dataset nodes contained by $N_{leaf}$ with $N_Q$ according to $N_{leaf}.inv$\;
\ForEach{$N_D \in N_{leaf}$}{
   \eIf{$\mathcal{R}$.size$\leq k$}{
   $\mathcal{R}$.Insert($N_D$)\;
   }{
   \If{$|N_D \cap N_Q| > \mathcal{R}.peek()$}{
   $\mathcal{R}$.Dequeue()\;
   $\mathcal{R}$.Insert($N_D$)\;
   }
   }
}
}

\Return $\mathcal{R}$\; \label{alg:IBS2}
	\vspace{-0.5em}
	\noindent\rule{8cm}{0.4pt}
  \SetKwFunction{FSum}{BranchAndBound}\label{BranchAndBound}
 \SetKwProg{Fn}{Function}{:}{}
  \Fn{\FSum{$N,N_Q, PQ$}}{
  \KwIn{$N$: tree node, $N_Q$: query node,
  $PQ$: priority queue\;}
  \eIf{$N$ is leaf node}{
  \If{$N.rect \cap N_Q.rect \neq \emptyset$}{
    $N.lb, N.ub \leftarrow$ Compute the lower and upper bounds based on the Lemmas~\ref{lemma:UB} and \ref{lemma:LB}\; 
\label{alg:computeBound}
    \If{$PQ$.isEmpty()}{
    $PQ$.Insert($N$)\;
    }
    \If{$N.ub > PQ.head().lb$}
    {
    \While{$N.lb \geq PQ.head().ub$ and $PQ.size\geq k$}{ \label{alg:dequeue}
    $PQ$.Dequeue()\;
    }
        $PQ$.Insert($N$)\; \label{alg:insert}
    }

  }
  }{
  \If{$N.rect \cap N_Q.rect \neq \emptyset$}
  {\label{alg:NonOverlap1} 
  $\texttt{BranchAndBound} (N.N_l, N_Q, PQ)$ \; \label{alg:leftNode}
    $\texttt{BranchAndBound}(N.N_r, N_Q, PQ)$ \; \label{alg:rightNode}
  }
  }
  } 
\end{algorithm}

\vspace{-0.1cm}
\subsection{Accelerating Coverage Joinable Search}
\label{sec:MCQCsolution}
For the coverage joinable search, we formulate it into an NP-hard \MCQC problem, aiming to find $k$ spatial datasets with the maximum coverage and satisfy the spatial connectivity. It is evident that achieving the optimal solution for \MCQC is computationally prohibitive. In addition, unlike the classic MCP~\cite{Hochbaum1998}, our \MCQC is a query-driven search problem with the spatial connectivity constraint. The basic greedy solution is that we iteratively traverse all datasets in the data source to find datasets directly connected to each dataset in the result set $\mathcal{R}$, and then select the dataset with the maximum marginal gain to add to the result set. 
Thus, the time complexity of each round search is $O(|\mathcal{R}|n)$, where $|\mathcal{R}|$ denotes the number of result datasets and $n$ denotes the number of datasets in the data source.
As the number of result datasets $|\mathcal{R}|$ gradually increases (from $1$ to $k$), the total time complexity is $O(n+2n+\dots+kn)=O(nk(k-1))$. 

To accelerate the search process, we design a greedy algorithm with spatial merge based on the local index, called \coverageSearch. Specifically, we first design efficient upper and lower bounds based on the local index, which greatly accelerates the time of finding the connected datasets. Secondly, we design a merge strategy, which merges the found results into a large dataset for searching, reducing the number of searches and effectively solving \MCQC.  






In solving \MCQC, we define the marginal gain $g(S_D, \mathcal{R})$ as the increased number of cell IDs in the result set $\mathcal{R}$ after adding a new cell-based dataset $S_D$:
\vspace{-0.2cm}
\begin{small}
\begin{equation}
    g(S_D, \mathcal{R}) = |S_D \cup (\cup_{S_{D_i} \in \mathcal{R}} S_{D_i})|- |\cup_{S_{D_i} \in \mathcal{R}} S_{D_i}|.
\end{equation}
\end{small}
The strategy of our proposed greedy algorithm is to iteratively pick the dataset that satisfies the spatial connectivity and has the maximum marginal gain. However, it is expensive to compute the spatial connectivity for each pair of datasets. Thus, we derive lower and upper bounds for distances based on the triangle inequality to accelerate the connectivity verification. 


\begin{figure}[t]
\vspace{-0.5cm}
\setlength{\abovecaptionskip}{-0.2 cm}
\setlength{\belowcaptionskip}{0 cm}
\centering
\includegraphics[width=8.0cm,height=3.6cm]{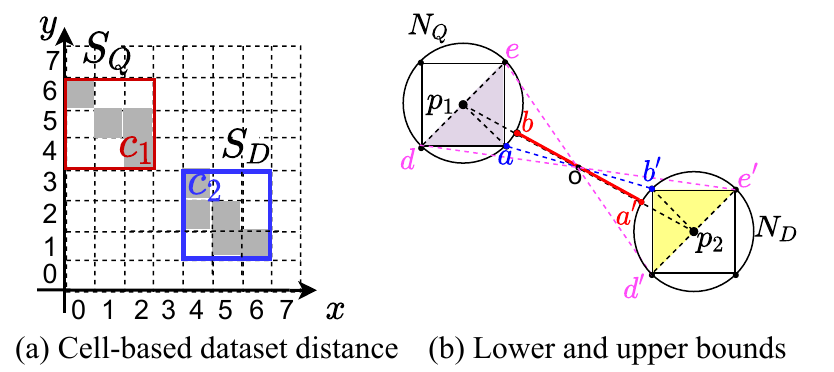}
\caption{Lower and Upper bounds of the cell-based dataset distance.}
\label{fig:bound}
\end{figure}

\begin{lemma1}
\label{lemma:distanceBound} Let $N_Q$ be a node of the query set $S_Q$, $N_D$ be a node of the cell-based dataset $S_D$. The distance $dist(S_Q, S_D)$ between $N_Q$ and $N_D$ can be bounded in the range:
\begin{small}
\begin{equation}
 \begin{aligned}
	 [ \, \max\{ ||N_Q.p, N_D.p||_2 - N_Q.r - N_D.r, 0\},\\ ||N_Q.p, N_D.p||_2 + N_Q.r + N_D.r\, ] \,. \\
	\end{aligned}
	\label{eq:nodeLB_QP}
\end{equation}
\end{small}
\end{lemma1}

\begin{proof}
We first prove the correctness of the lower bound.
As shown in Fig.~\ref{fig:bound}, $N_Q$ and $N_D$ are two nodes transformed by datasets $S_Q$ and $S_D$. Let $a$ and $b$ denote the two closest points in two MBRs, and $a'$ and $b'$ denote two intersect points in the line segment connected by two centers $p_1$ and $p_2$. 
The triangle composed of points $p_1, a$ and $o$ is denoted by $\triangle p_1ao$. According to the triangle inequality, we have:
\begin{small}
\begin{equation}
 \begin{aligned}
||p_1, a||_2 + ||a, o||_2 \geq ||p_1, o||_2.
	\end{aligned}
	\label{eq:nodeUB_QP1}
\end{equation}
\end{small}

Next, based on Inequation~(\ref{eq:nodeUB_QP1}) and $||p_1,a||_2 = ||p_1, b||_2$, we can deduce that $||a, o||_2 \geq ||b, o||_2$. Similarly, in $\triangle p_2b'o$, we can deduce 
$||b', o||_2 \geq ||a', o||_2$. Hence, $||a, b'||_2 \geq ||a', b||_2$.
As the dataset distance $dist(S_Q, S_D)$ is to find the minimum of the distance from an element in $S_D$ to its nearest neighbor element in $S_Q$,  $dist(S_Q, S_D) = ||a, b'||_2 \geq ||b, a'||_2= \max\{ ||N_Q.p, N_D.p||_2 - N_Q.r - N_D.r, 0\}$.

In the following, we prove the correctness of the upper bound. As shown in Fig.~\ref{fig:bound}, each ball is divided into two
half-balls by the diagonal $de$ and $d'e'$, each part should have at least one cell. Thus, the maximum distance between two nodes' nearest neighbor cells is $\max\{||d, e'||_2$, $||d',e||$ \}. Similarly, according to the triangle inequality, we can find that $||d, e'||_2 \leq ||p_1, p_2||_2+||p_1, d||_2+||p_2,e'||$ and $||d', e||_2 \leq ||p_1, p_2||_2+||p_1, e||_2+||p_2,d'||$. Thus, the upper bound of the dataset distance $dist(S_Q, S_D) \leq ||N_Q.p, N_D.p||_2 + N_Q.r + N_D.r.$
\end{proof}
\begin{example1}
As shown in Fig.~\ref{fig:bound}, we can compute the dataset distance between the query dataset $S_Q$ and the cell-based dataset $S_D$ is $dist(S_Q, S_D) = ||q_1, d_1||_2 = \sqrt{5} \approx 2.236$. While based on Lemma~\ref{lemma:distanceBound}, we can compute the lower bound of the dataset distance is $\max\{ 5 - \sqrt{2} -\sqrt{2}, 0\} \approx 2.172 \leq 2.236$, and the upper bound is $ 5 + \sqrt{2} + 
 \sqrt{2} \approx 7.828 \geq 2.236$, which proves the validity of the lower and upper bounds.  
\end{example1}

\begin{algorithm}[tbp]
\small
\caption{$\texttt{CoverageSearch}(N_{root},N_Q, \delta,k)$}
\label{alg:MergeGreedy}
\LinesNumbered 
  \KwIn{$N_{root}$: root node of local index, $N_Q$: query node, $\delta$: connectivity threshold, $k$: number of results}
  \KwOut{$\mathcal{R}$: result set}
  $List \leftarrow \emptyset$, $\mathcal{R} \leftarrow \{N_Q\}$, $N_M \leftarrow N_Q$\;
  
  \While{$\mathcal{R}.size() \leq k$}{
  $\tau \leftarrow -\infty$,
  $N_{best} \leftarrow null$\; \label{alg:MGCG-initial}
       $\texttt{FindConnectSet}(N_{root}, N_M, \delta, List)$\;
    \ForEach{$N_{D} \in List$}{ \label{alg:findBest}
    \If{$|N_{D}.S_D| > \tau$}{ \label{alg:MGCG-upperBound}
    \If{$g(N_D, \mathcal{R}) > \tau$}{ \label{alg:MGCG-verify1}
     $N_{best} \leftarrow N_D$\;
     $\tau \leftarrow g(N_D, \mathcal{R})$\;\label{alg:MGCG-verify2}
        }
        
    }
  }
    $\mathcal{R}.add(N_{best})$\; \label{alg:MGCG-addBest}
    $N_M \leftarrow$ Merge $N_M$ and $N_{best}$\;\label{alg:MGCG-merge}
  }
\Return $\mathcal{R}$\;
	\vspace{-0.5em}
	\noindent\rule{8cm}{0.4pt}
  \SetKwFunction{FSum}{FindConnectSet}\label{FindConnectSet}
 \SetKwProg{Fn}{Function}{:}{}
  \Fn{\FSum{$N, N_Q, \delta, List$}}{
  \KwIn{$N$: tree node, $N_{Q}$: query node, $\delta$: connectivity threshold, List: list of dataset nodes that directly connected to $N_Q$\;}
    $lb \leftarrow \max\{||N.p, N_Q.p||_2 - N.r - N_Q.r, 0\}$\; \label{alg:MGCG-d}
    $ub \leftarrow ||N.p, N_Q.p||_2 + N.r + N_Q.r$\; \label{alg:MGCG-d2}
\eIf{ub$\leq \delta$}{
List.add(all dataset nodes contained by $N$)\;
}{
\If{lb $\leq \delta$}{
\eIf{$N$ is leaf node}{
\ForEach{$N_D \in N.ch$}{
\If{$dist(N, N_D)\leq \delta$}{
List.add($N_D$)\; \label{alg:MGCG-addCandi}
}
}
  
}{
 $\texttt{FindConnectSet}(N.N_l, N_Q, \delta, List)$\;\label{alg:MGCG-traverse1}
  $\texttt{FindConnectSet}(N.N_r, N_Q, \delta, List)$\;\label{alg:MGCG-traverse2}
}
}
}\label{alg:MGCG-findConnected}

  } 
\end{algorithm}

Algorithm~\ref{alg:MergeGreedy} shows the pseudocode of \coverageSearch algorithm based on the lower and upper bounds. 
Specifically, we first initialize a result set $\mathcal{R}$ and put query node $N_Q$ into it. 
Then, we define a merged node $N_M$, which is constructed from the merged set $\cup_{N_i \in \mathcal{R}}$. 
Next, we perform $k$ depth-first search in the local index and find the dataset $S_{D} \in (\mathcal{S}_\mathcal{D} \backslash \mathcal{R})$ that is connected to the $S_i$ and has the maximum marginal gain $g(S_D, \mathcal{R})$ in each iteration (Lines~\ref{alg:MGCG-d} to \ref{alg:MGCG-findConnected}).


For each connected dataset node, we can decide whether to filter based on the maximum marginal gain it can bring.  In each round of searching, we use a symbol $\tau$ to represent the maximum marginal gain found so far. Then, for the dataset node that satisfies $|N_{D}.S_D| > \tau$, we compute the exact number of coverage of each dataset node and judge whether it is the local optimal set (Lines~\ref{alg:findBest} to \ref{alg:MGCG-verify2}). By repeating this step until the whole candidate set is traversed, we can add the set that satisfies the connectivity and has the maximum increment to $\mathcal{R}$. After $k$ iterations, the algorithm terminates.

The \coverageSearch algorithm has a time complexity of $O(kn)$ and provides an approximation ratio of $1-1/e$ under certain assumptions.
Please refer to Appendixes~\ref{appendix:indexComplexity} and ~\ref{appendix:approximation} for details due to the page limitation.

\section{Experiments}
\label{sec:exp}

In Section~\ref{sec:setup}, we first introduce the experimental settings. Then, we evaluate the efficiency of the index construction in Section~\ref{sec:effi-index} and the search performance and communication cost of \MIQ in Section~\ref{sec:effi-MIQ}. Afterwards, we present the search performance and communication cost of \MCQC in Section~\ref{sec:effi-MCQC}.


\begin{table*}[h!]
\setlength{\abovecaptionskip}{0cm}
\setlength{\belowcaptionskip}{0cm}	
	\centering
    \caption{Details of five spatial data sources.}
    \label{tab:dataset}
		\renewcommand\arraystretch{1.2}\addtolength{\tabcolsep}{-1pt}
		\begin{tabular}{ccccc}
			\hline
			\textbf{Data source}    &\textbf{Storage (GB)}  &\textbf{Number of datasets} & \textbf{Number of points} & \textbf{Coordinates range}  \\
			\hline
			\texttt{Baidu-dataset}     &0.18      &6,581    &[3,710,526] &  [($ 87^{\circ}31',19^{\circ}59'$), ($127^{\circ}09',46^{\circ}21'$)]  \\
			 \texttt{BTAA-dataset}    &3.39       &3,204    &[96,788,280]  & [($ -179^{\circ}46',-87^{\circ}70'$), ($179^{\circ}99',71^{\circ}40'$)]  \\ 
			\texttt{NYU-dataset}   &0.89     &1,093   &[15,303,410] & [($-138^{\circ}00',-74^{\circ}01'$), ($56^{\circ}39',83^{\circ}09'$)] \\
			\texttt{Transit-dataset}      &0.21      & 1,967   &[522,461]  &  [($ -77^{\circ}73',36^{\circ}81'$), ($-74^{\circ}53',39^{\circ}78'$)] \\
   			\texttt{UMN-dataset}      &1.34      & 5,453   &[54,417,609]  &  [($ -179^{\circ}14',-14^{\circ}55'$), ($179^{\circ}77',71^{\circ}35'$)] \\
			\hline
		\end{tabular}
		\vspace{-1em}
\end{table*}%

\begin{figure*}[htbp]
\centering
\vspace{-0.2cm}
\setlength{\abovecaptionskip}{-0.6 cm} 
\setlength{\belowcaptionskip}{-0.6 cm} 
\subfigure[]{
\begin{minipage}[t]{0.2\linewidth}
\centering
\includegraphics[width=3.4cm,height=2.8cm]{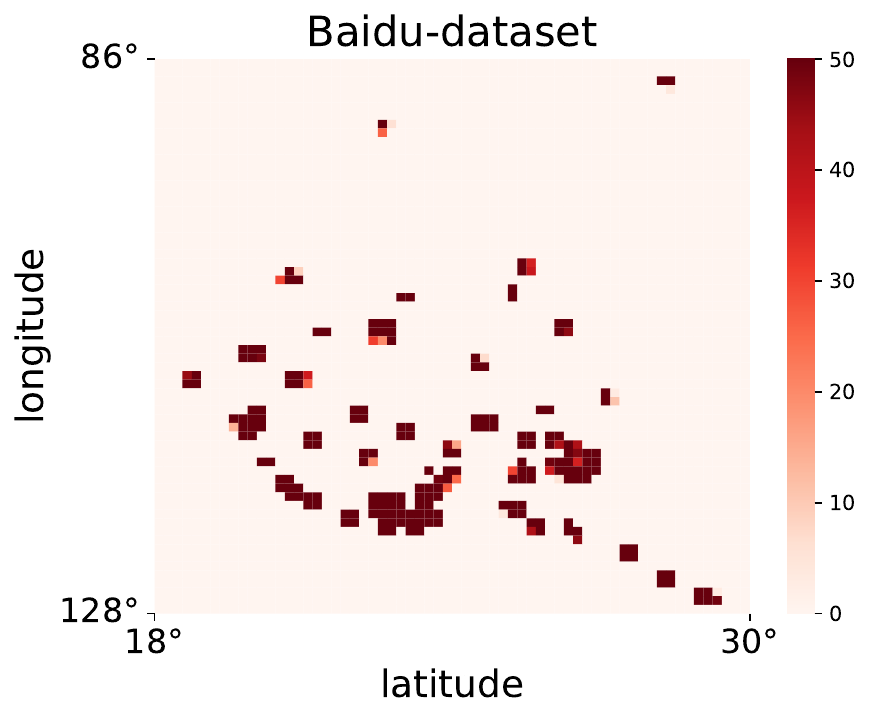}
\end{minipage}%
}%
\subfigure[]{
\begin{minipage}[t]{0.2\linewidth}
\centering
\includegraphics[width=3.4cm,height=2.8cm]{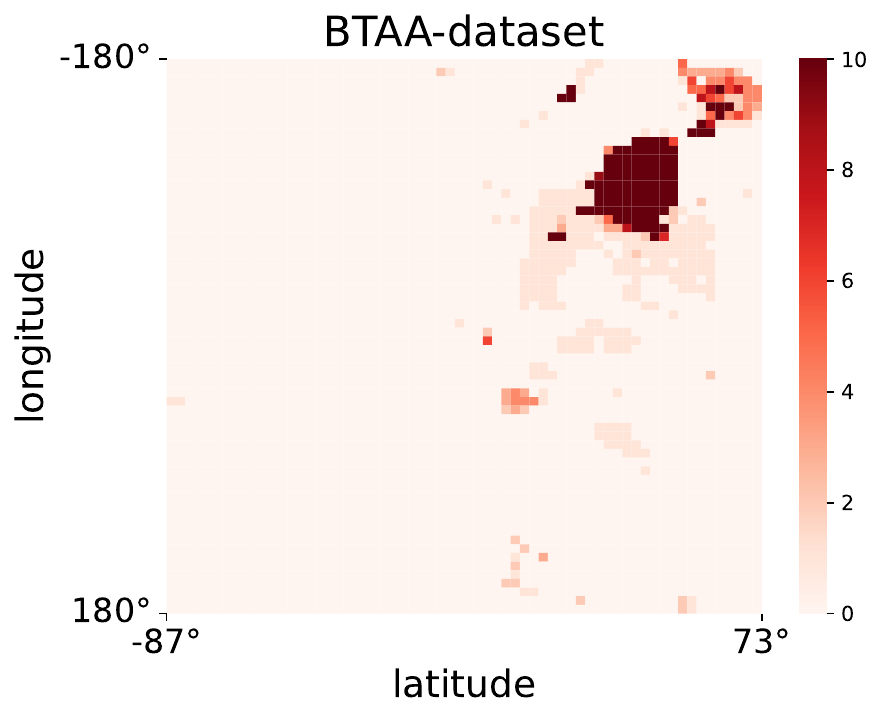}
\end{minipage}%
}%
\subfigure[]{
\begin{minipage}[t]{0.2\linewidth}
\centering
\includegraphics[width=3.4cm,height=2.8cm]{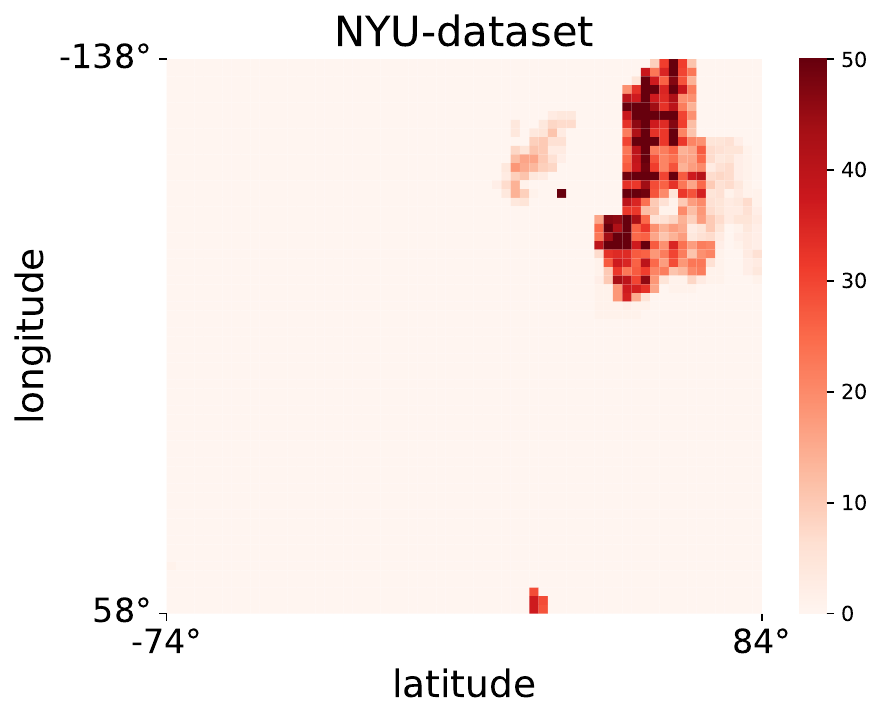}
\end{minipage}%
}%
\subfigure[]{
\begin{minipage}[t]{0.2\linewidth}
\centering
\includegraphics[width=3.4cm,height=2.8cm]{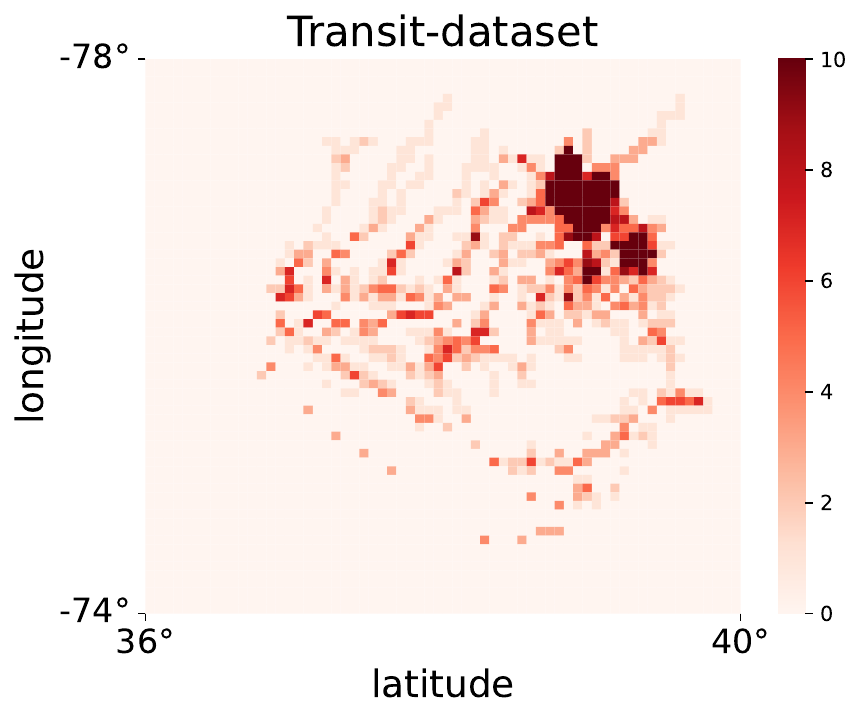}
\end{minipage}%
}%
\subfigure[]{
\begin{minipage}[t]{0.2\linewidth}
\centering
\includegraphics[width=3.4cm,height=2.8cm]{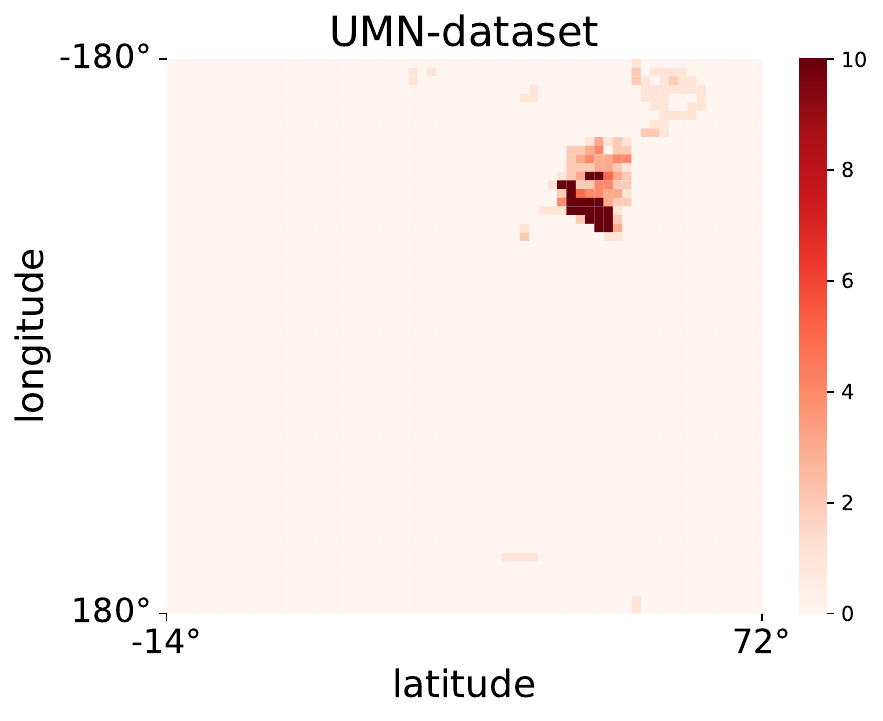}
\end{minipage}%
}%
\centering
\caption{The heatmaps of five data sources.}
\label{fig:heatmap}
\vspace{-0.4cm}
\end{figure*}

\vspace{-0.2cm}
\subsection{Experimental Setups}
\label{sec:setup}
\vspace{-0.2cm}
\myparagraph{Datasets} We conduct experiments on the following five data sources, each of which is downloaded from an open-source spatial data portal. Table~\ref{tab:dataset} shows the statistics of five data sources. Fig.~\ref{fig:heatmap} presents each data source's dataset distribution heatmap, showing the spatial datasets' distribution density in the space. In addition, further details about these five data sources are provided in Appendix~\ref{appendix:dataset}.


\myparagraph{Environment}
To implement the multi-source joinable search on collected five data sources, we conducted all experiments on devices equipped with a 2.90 GHz Intel(R) Core(TM) i5-12400 processor and 16GB of memory. The implementation code can be obtained from the GitHub repository\footnote{\url{https://github.com/whu-totemdb/DITS_code}\label{code}}.

\begin{table}[h]
\setlength{\abovecaptionskip}{0cm}
	\centering
		\caption{Parameter settings.}
  \label{tab:parameter}
  \renewcommand\arraystretch{1.2}\addtolength{\tabcolsep}{-1pt}
		\begin{tabular}{lc}
			\hline
			\textbf{Parameter} & \textbf{Settings}   \\
			\hline
			$k$: number of results  & \{\underline{10}, 20, 30, 40, 50\} \\
			$q$: number of queries &\{ \underline{10}, 20, 30, 40, 50 \} \\
			$\theta$: resolution & \{10, 11, \underline{12}, 13, 14 \}\\ 
            $\delta$: connectivity threshold & \{0, \underline{5}, 10, 15, 20 \}\\
            $f$: leaf node capacity & \{\underline{10}, 20, 30, 40, 50 \}\\
			\hline
		\end{tabular}
\end{table}

\myparagraph{Query and Parameter Generation}
We randomly select 50 datasets from all downloaded datasets as the query datasets. We alter several key parameters including $k$, $q$, $\theta$, $\delta$, $f$ to observe the experimental results. Firstly, the parameter $\theta$ determines the grid's size; thus, we set the resolution according to the distance sampling \cite{Wang2017}. For example, one degree of longitude or latitude is about 111km. If we divide the globe into a $2^{12} \times 2^{12}$ grid, then each cell's area is about $10km \times 5km$. Similarly, the $\delta$ can be set according to the closest distance between the point pairs of two spatial datasets that the user requires. For other parameters, we can also set them according to the requirements of users and data sources. We summarize the parameter settings in Table~\ref{tab:parameter}, and the default value for each parameter is underlined.

\vspace{-0.2cm}
\subsection{Efficiency of Index Constructionn}\label{sec:effi-index}
To accelerate the search performance of \MIQ and \MCQC in each data source, we design the local index \localIndex and show its index construction performance compared with four sota indexes in Fig.~\ref{fig:index1}. Since \localIndex combines the characteristics of the tree index and inverted index, we choose two classical tree indexes and two inverted indexes for comparison.
\noindent
\begin{itemize}
 [leftmargin=*] 
\item \quadtree \cite{Gargantini1982}: \quadtree divides the whole space containing all datasets into four equal-sized quadrants, and then recursively subdivides each quadrant if it contains a sufficient number of data points.
\item  \Rtree \cite{Guttman1984}: \Rtree groups nearby objects and represent them with their minimum bounding rectangle (MBR) in the inner nodes of the tree.
\item \STS \cite{Peng2016}: \STS divides the plane into cells and constructs an inverted index to accelerate the intersection computation.
\item \Josie \cite{zhu2019}: \Josie designs a sorted inverted index where each posting list contains the id, position, and size of datasets.
\end{itemize}

\begin{figure*}[htbp]
\vspace{-0.2cm}
\setlength{\abovecaptionskip}{-0.6 cm} 
\setlength{\belowcaptionskip}{-2 cm} 
\subfigure[]{
\begin{minipage}[t]{0.47\textwidth}
\centering
\includegraphics[width=8.7cm]{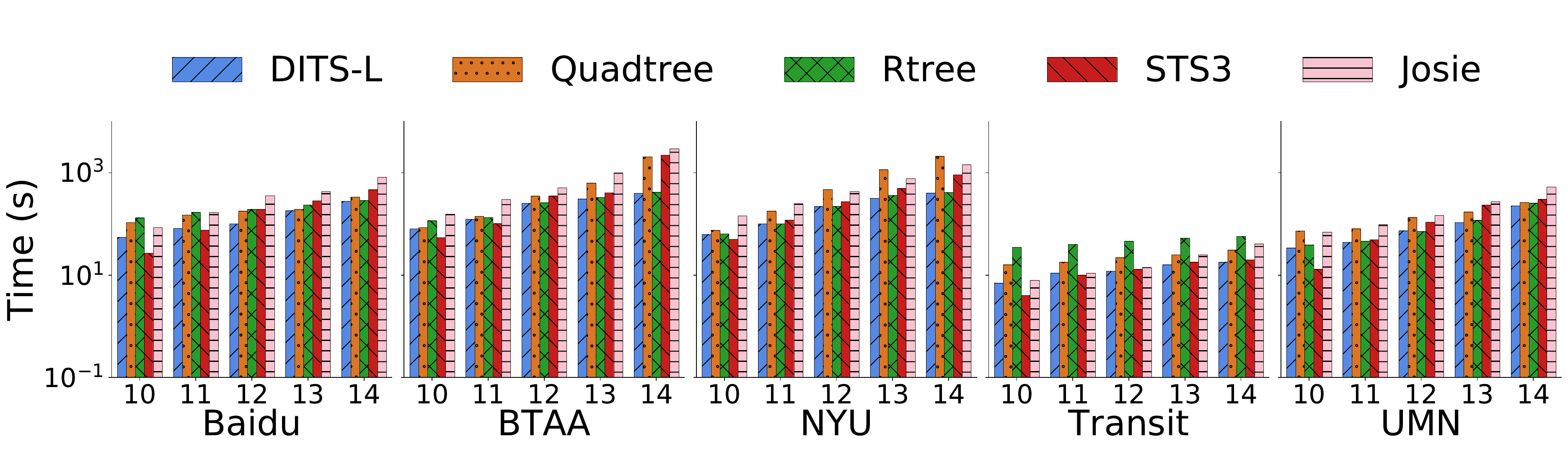}
\end{minipage}%
}%
\subfigure[]{
\begin{minipage}[t]{0.53\textwidth}
\centering
\includegraphics[width=8.7cm]{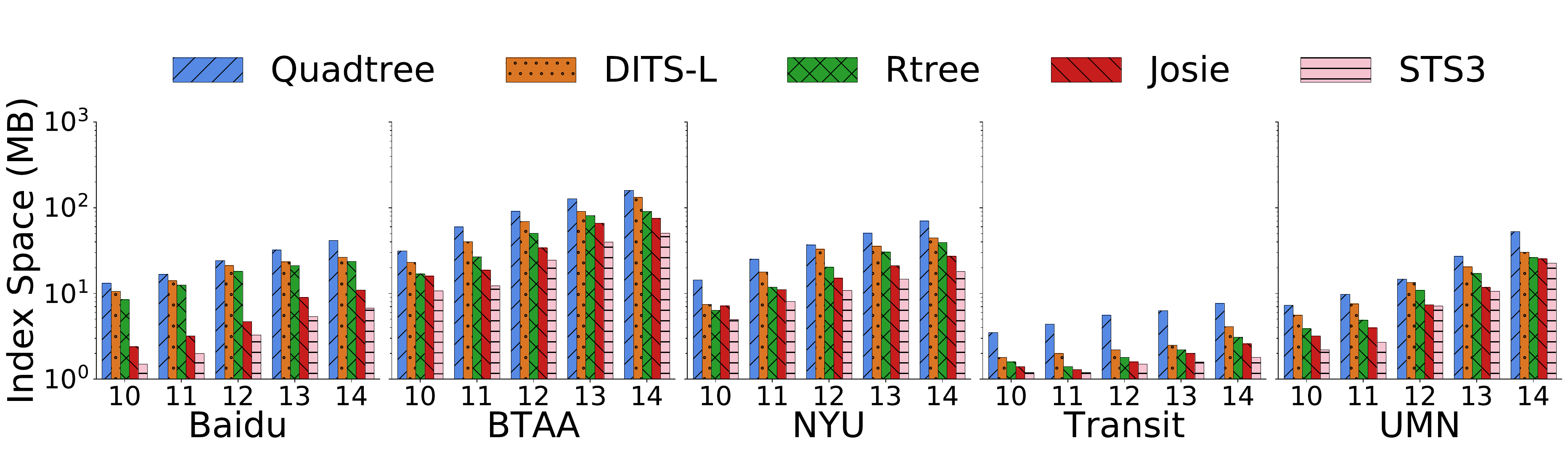}
\end{minipage}%
}%
\centering
\caption{Comparison of five indexes' construction time and memory space with $\theta$ increasing.}\label{fig:index1}
\vspace{-0.6cm}
\end{figure*}

\myparagraph{Time and Space Comparison of Multiple Indexes}
We theoretically compare the index construction time and memory space of \localIndex and four indexes as the $\theta$ increases, as shown in Fig.~\ref{fig:index1}. 
We use $n$ to denote the number of datasets, and $N$ to denote the total number of cell IDs contained by all datasets. The time complexity of \STS~\cite{Peng2016}, \Rtree and \localIndex is $O(n\log{n})$; the time complexity of \Josie\cite{zhu2019} is $O(n^2)$, while \quadtree has a time complexity of $O(N\log{N})$. 
Thus, from the left of Fig.~\ref{fig:index1}, we observe that \Josie is slower than the other indexes in most cases, because it takes more time to sort datasets. Additionally, we can observe that \STS is the fastest at lower $\theta$. This is because when $\theta$ is small, the size of each cell-based dataset is also small. Compared to \STS, \localIndex takes extra time to construct the tree index. Furthermore, we can observe that \localIndex is always slightly faster than \Rtree. This is because \Rtree is a balanced search tree, which needs more time to organize the spatial datasets' MBR.

We also compare the memory space of the five indexes theoretically. Firstly, from the right of Fig.~\ref{fig:index1}, we can observe that the memory space of five indexes gradually increases as $\theta$ increases. This is because a higher resolution results in an increase in the number of cell IDs in each cell-based dataset. In addition, \quadtree occupies the largest memory space, followed by \localIndex, while \STS requires the least memory. This is because \quadtree is built based on the cell IDs of all datasets, and the height of the tree is $O(\log{N})$. In contrast, \localIndex and \Rtree are built based on the dataset, and the height of the tree is $O(\log{n}) (N>>n)$. Thus, \quadtree stores more index nodes compared to \localIndex and \Rtree.

\subsection{Efficiency of Overlap Search and Communication}
\label{sec:effi-MIQ}

\begin{figure*}[htbp]
\setlength{\abovecaptionskip}{-0.3 cm} 
 \begin{minipage}[b]{.48\textwidth}
  \centering
 \includegraphics[width=7.8cm]{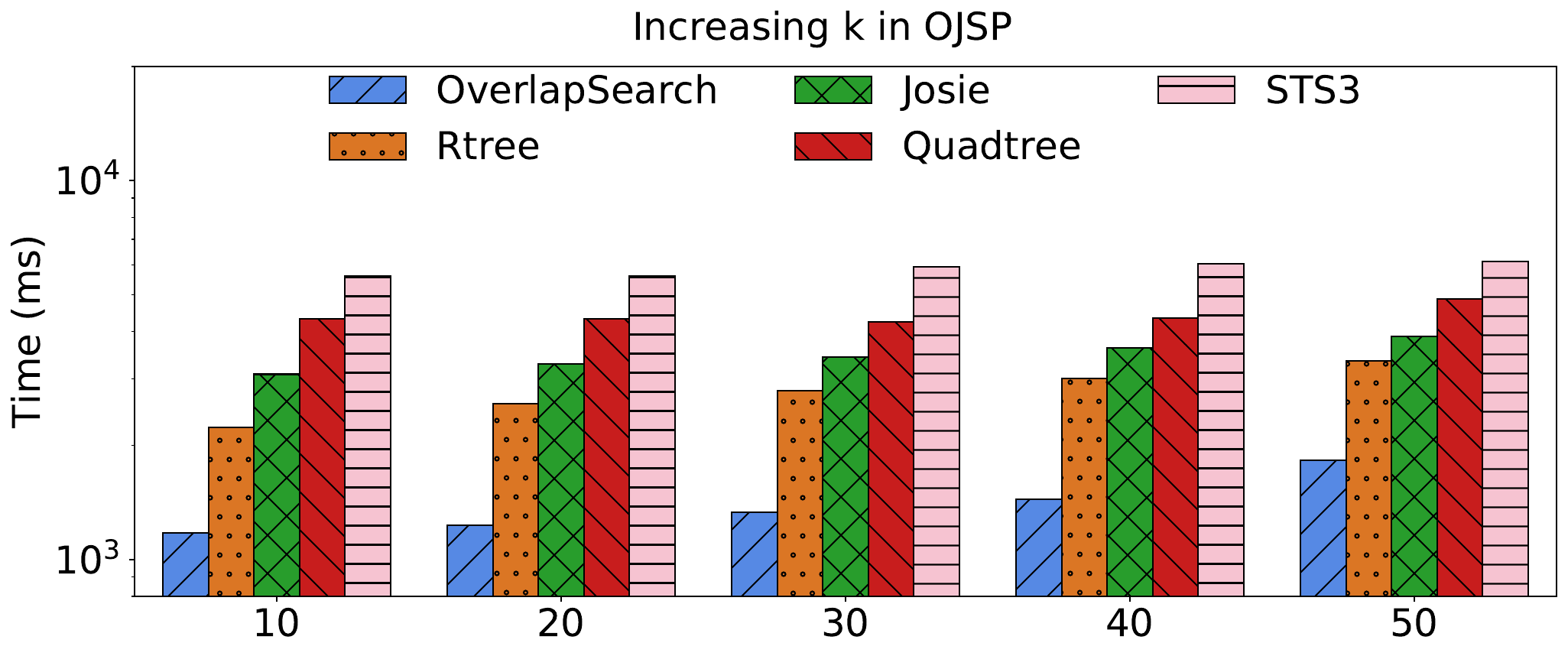}
\caption{Comparison of search time with the increase of $k$.}
\vspace{-1.4em}
\label{fig:topKWI}
 \end{minipage}
 \begin{minipage}[b]{.48\textwidth}
  \centering
  \includegraphics[width=7.8cm]{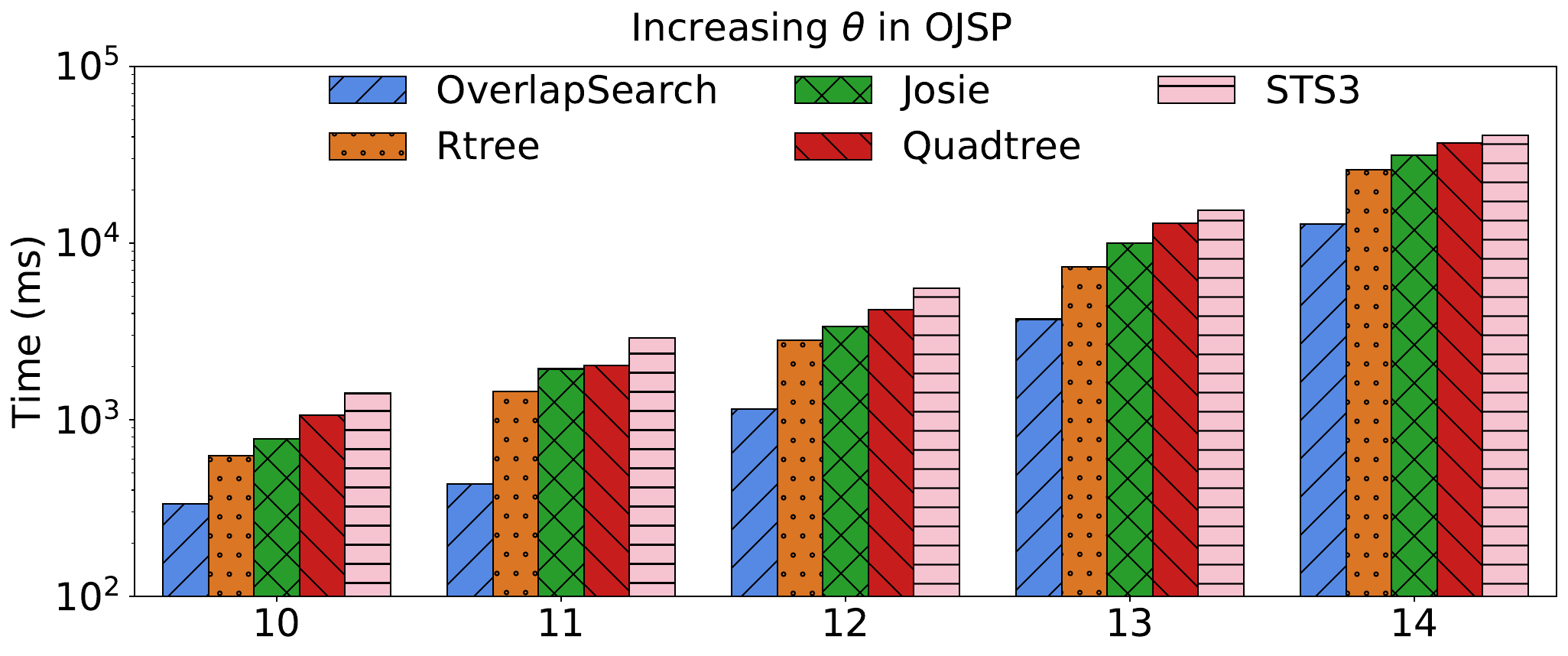}
\caption{Comparison of search time with the increase of $\theta$.}
  \vspace{-1.4em}
\label{fig:topKR}
 \end{minipage}
\end{figure*}

\begin{figure*}[htbp]
\setlength{\abovecaptionskip}{-0.3 cm} 
 \begin{minipage}[b]{.48\textwidth}
  \centering
 \includegraphics[width=7.8cm]{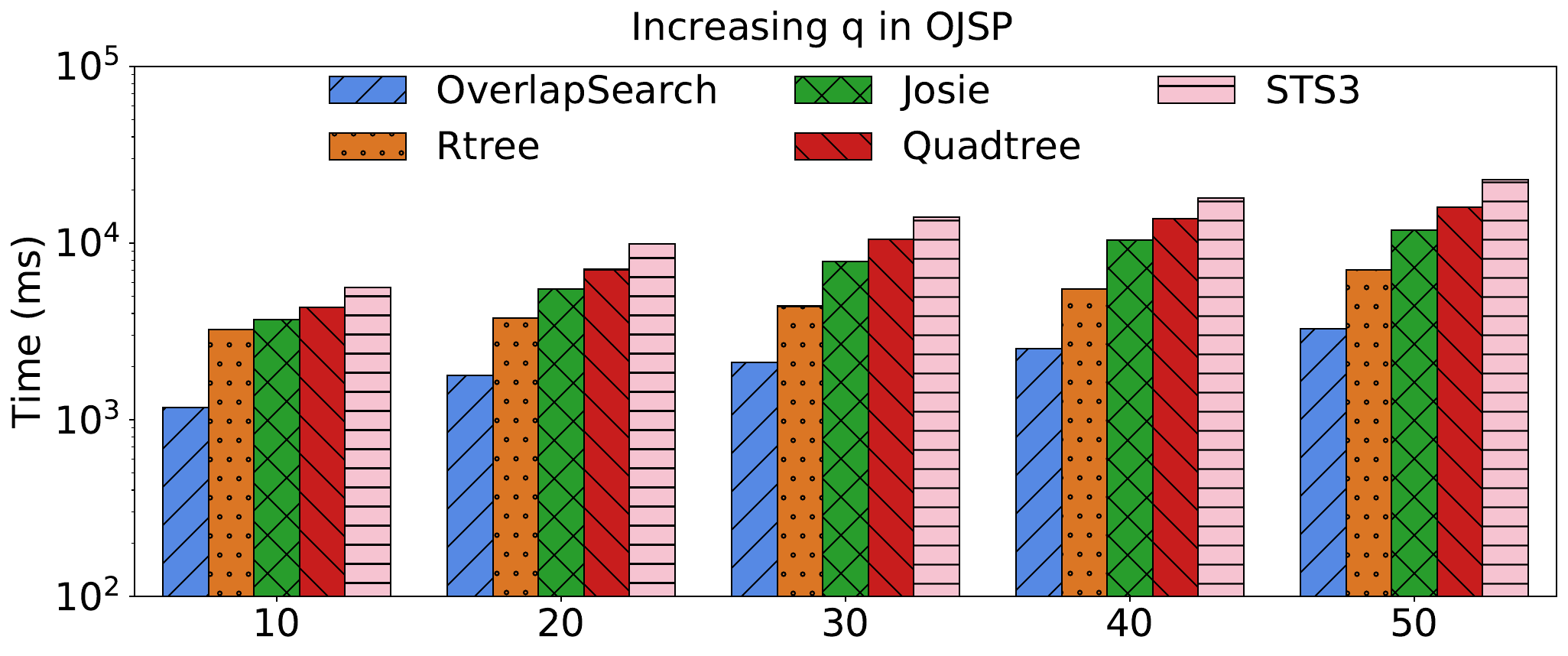}
\caption{Comparison of search time with the increase of $q$.}
  \vspace{-1.4em}
\label{fig:MIPQueryNumber}
 \end{minipage}
  \begin{minipage}[b]{.48\textwidth}
  \centering
 \includegraphics[width=7.8cm]{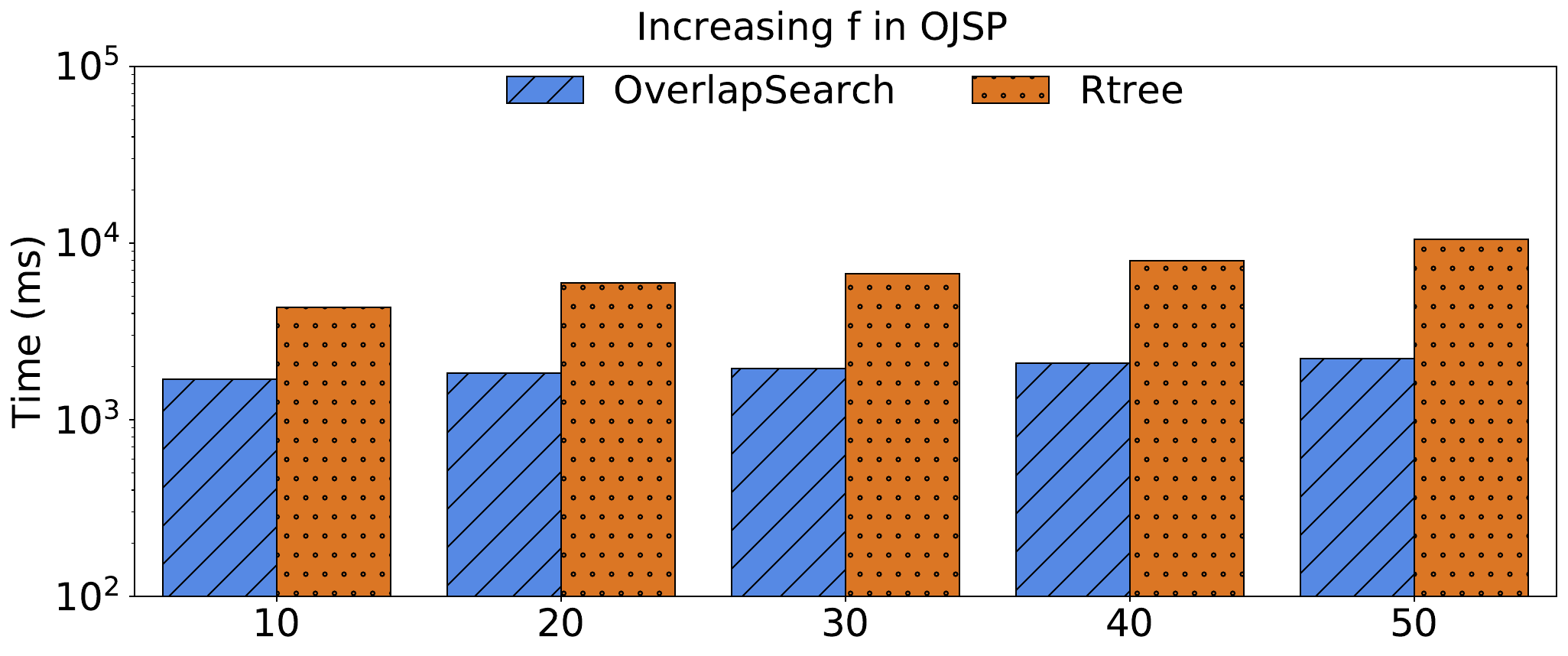}
\caption{Comparison of search time with the increase of $f$.}
  \vspace{-1.4em}
\label{fig:MIPcapacity}
 \end{minipage}
\end{figure*}

\begin{figure*}[htbp]
\setlength{\abovecaptionskip}{-0.3 cm} 
\setlength{\belowcaptionskip}{-0.1 cm} 
 \begin{minipage}[b]{.48\textwidth}
  \centering
  \includegraphics[width=7.8cm]{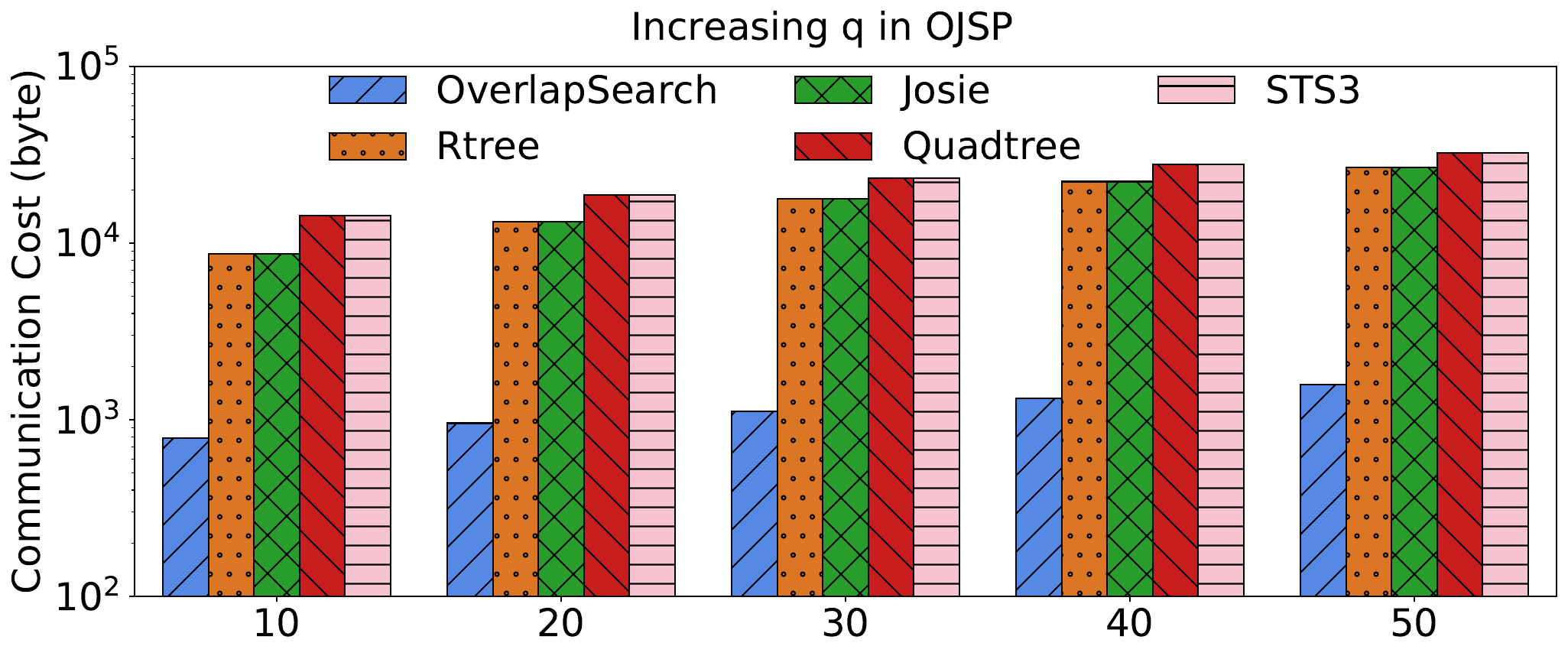}
  
    \caption{Communication cost with the increase of $q$.}
      \vspace{-1.8em}
  \label{fig:Comcost}
 \end{minipage}
 \begin{minipage}[b]{.48\textwidth}
  \centering
  \includegraphics[width=7.8cm]{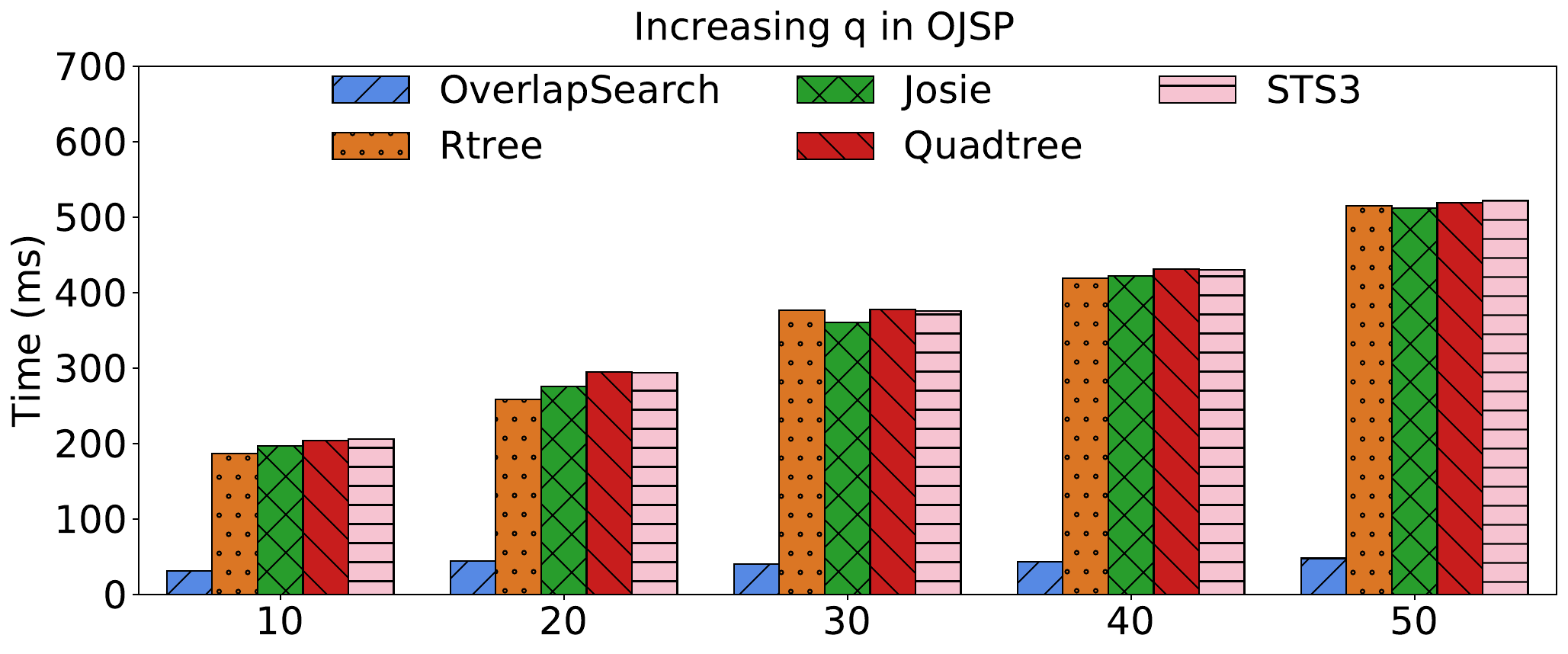}
  \caption{Transmission time with the increase of $q$.}
    \vspace{-1.8em}
 \label{fig:TransmissionTime}
 \end{minipage}
\end{figure*}

\myparagraph{Methods for Comparisons} 
We compare our proposed exact \overlapSearch with four exact algorithms based on the four indexes introduced in Section~\ref{sec:effi-index}. When performing \MIQ based on \quadtree~\cite{Gargantini1982}, we find all leaf nodes that intersect the query and count the dataset IDs of the points that overlap with the query dataset. Moreover, when performing \MIQ based on \Rtree~\cite{Guttman1984}, we find all intersecting datasets based on their MBR, and compute the size of the set intersection. For \STS~\cite{Peng2016} and \Josie~\cite{zhu2019}, we directly apply them to cell-based datasets to compute set intersection.

\subsubsection{Efficiency of Overlap Joinable Search} 
In this section, we present the experimental results of \overlapSearch algorithm and the four comparison algorithms that vary with several key parameters on five data sources.

\myparagraph{Efficiency Comparison with Different $k$}
As shown in Fig.~\ref{fig:topKWI}, we can see that \overlapSearch achieves 1.7 to 4.8 times speedup compared with the other four algorithms, indicating that our distributed index and pruning strategies are effective. 
Although \Rtree and \quadtree also apply the tree index, \Rtree shows the second-best performance compared with \quadtree, as \quadtree is constructed based on the cell IDs of all spatial datasets, we need to find all set elements that intersect with the query and record the sorted results, which is similar to the inverted index. Moreover, we can see that \Josie runs faster than \STS because \Josie can terminate the search early by the prefix filter. We can see that for the search algorithms based on \overlapSearch, \Rtree and \Josie, the runtime shows a slight increase. In contrast, the running time of \quadtree and \STS remains relatively unchanged with increasing $k$. The reason is that they need to sort all datasets that intersect with the query according to the number of overlapping cell IDs. Thus, the change in $k$ has minimal impact on runtime.


\vspace{-0.1cm}
\myparagraph{Efficiency Comparison with Different $\theta$}
We increase $\theta$ to investigate the search performance of \overlapSearch compared with the other four algorithms. We can observe that the search time of the five algorithms increases gradually as the $\theta$ increases in Fig.~\ref{fig:topKR}. The reason is that each spatial dataset is divided into finer granularity as the $\theta$ increases. However, \overlapSearch is always 1.8 to 6.7 times faster than the other four algorithms under different $\theta$. The reason is that our algorithm can quickly filter out unpromising leaf nodes using \localIndex and bounds. In addition, we can also see that the two algorithms based on the tree index maintain a better search performance than the other two based on the inverted index, indicating that filtering based on the geographic characteristics of the spatial dataset can provide better search performance.

\myparagraph{Efficiency Comparison with Different $q$} 
Fig.~\ref{fig:MIPQueryNumber} shows the search performance as the $q$ increases. We can see that the search time of \overlapSearch and \Rtree increases slightly. The running time at $q=50$ increases only by about $3\times$ compared to the running time at $q=10$. In contrast, the other four algorithms have an increase of $4\times$ or more. The reason is that the search based on the tree index has a faster filtering power than the other three algorithms, so it can show more stable search performance as the number of queries increases. In addition, the search based on \quadtree is to find the intersection point, which is similar to the inverted index, thus it has a slower time than the other two tree algorithms.

\vspace{-0.1cm}
\myparagraph{Efficiency Comparison with Different $f$} 
We also investigate the search performance as $f$ increases, and the results are shown in Fig.~\ref{fig:MIPcapacity}. Here, we only show the experiment results of \overlapSearch and \Rtree, as the leaf node capacity in \quadtree is 4, \STS and \Josie are based on the inverted index. We can observe that the running time of \overlapSearch increases slightly as the $f$ increases. This is because bigger $f$ cannot be pruned easily, which degrades the performance. However, our algorithm still has a better performance than \Rtree, as our algorithm has a stronger filtering power and faster computation time based on the lower and upper bounds.

\begin{figure*}[htbp]
\setlength{\abovecaptionskip}{-0.3 cm} 
\setlength{\belowcaptionskip}{-0.3 cm} %
 \begin{minipage}[b]{.48\textwidth}
  \centering
  \includegraphics[width=7.8cm]{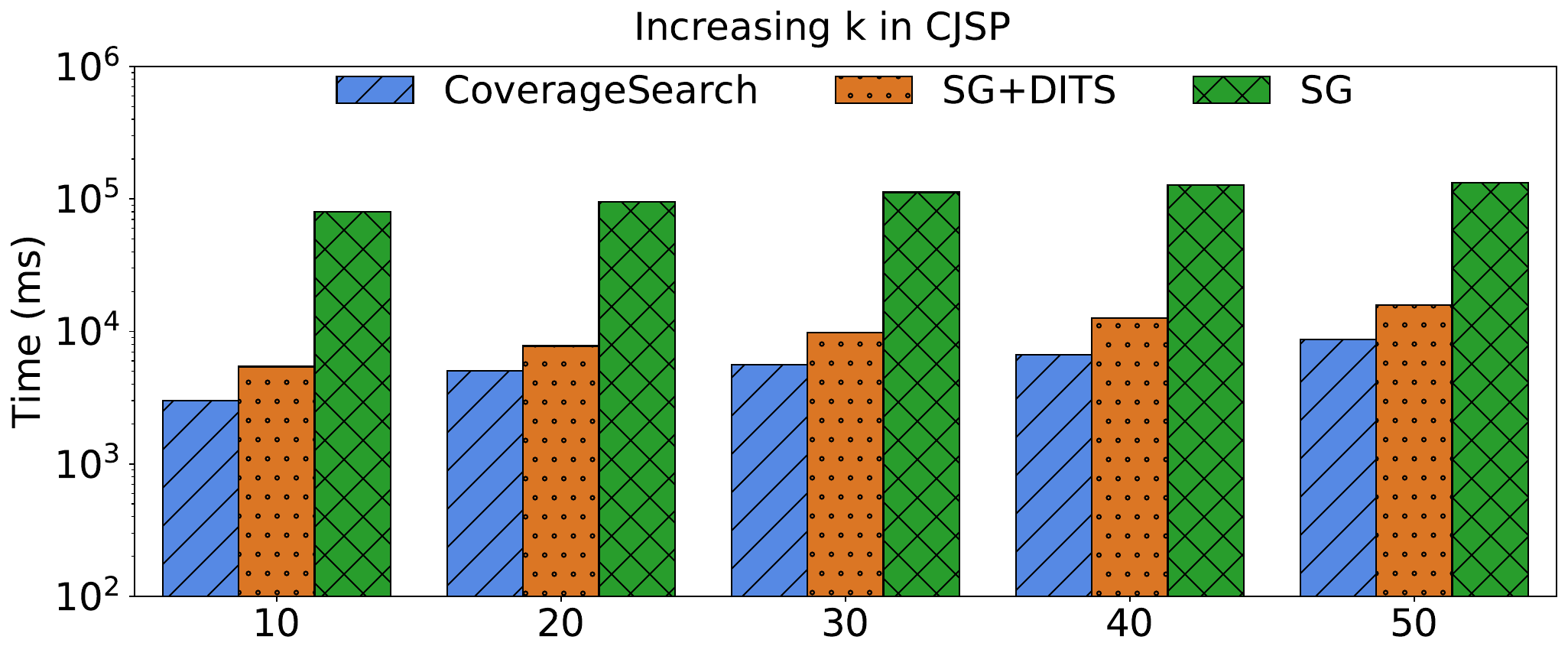}
 \caption{Comparison of search time with the increase of $k$.}
     \vspace{0.3em}
 \label{fig:MCQCtopKWI}
 \end{minipage}
 \begin{minipage}[b]{.48\textwidth}
  \centering
  \includegraphics[width=7.8cm]{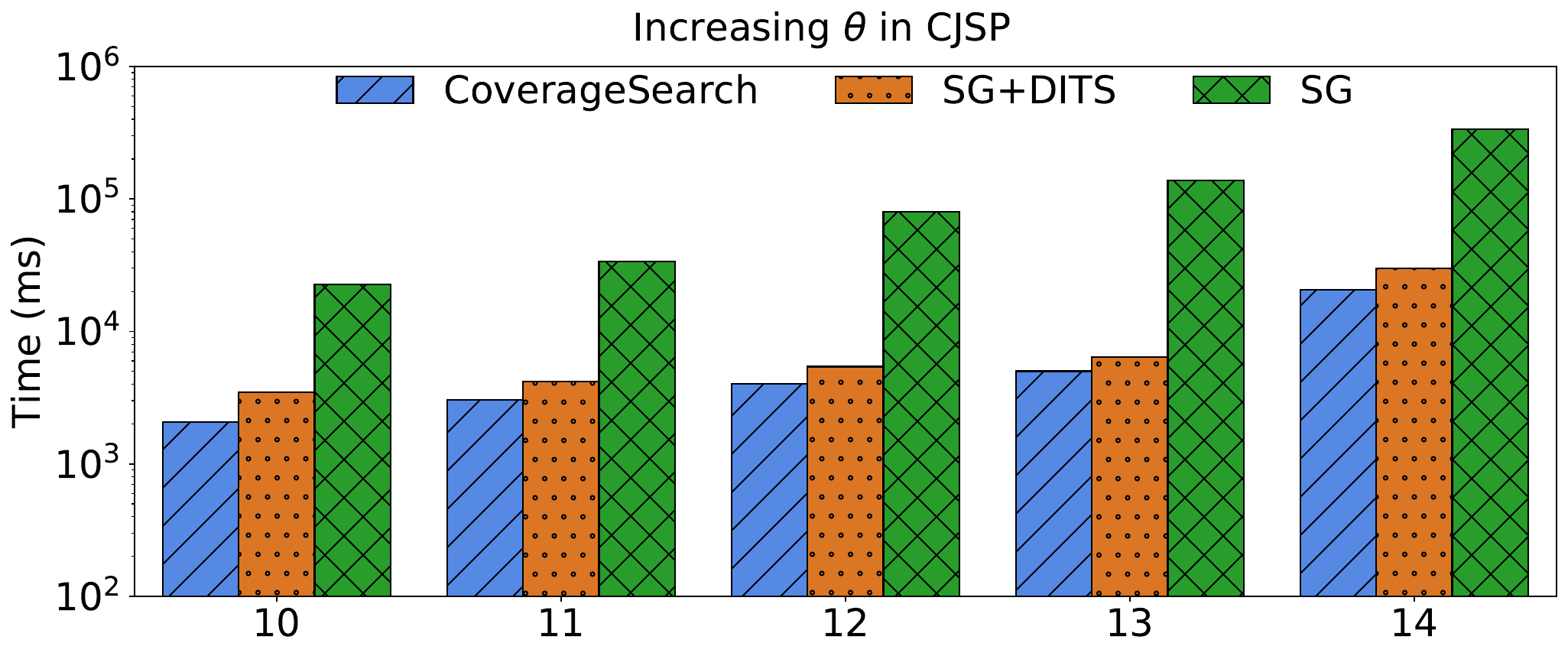}
\caption{Comparison of search time with the increase of $\theta$.}
  \vspace{0.3em}
\label{fig:MCQCtopKR}
 \end{minipage}
 \vspace{-1.5em}
\end{figure*}

\begin{figure*}[htbp]
\setlength{\abovecaptionskip}{-0.3 cm} 
\setlength{\belowcaptionskip}{-0.1 cm} 
 \begin{minipage}[b]{.48\textwidth}
  \centering
  \includegraphics[width=7.8cm]{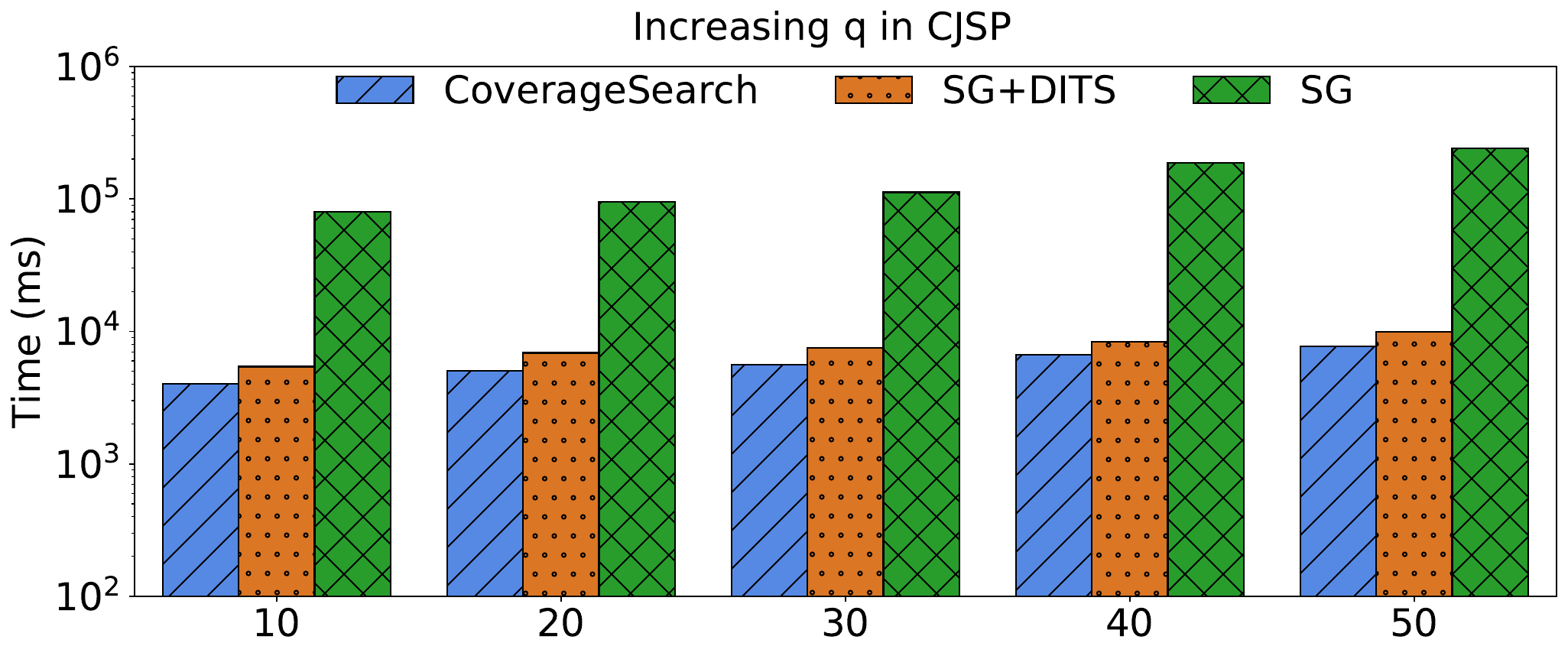}
  \caption{Comparison of search time with the increase of $q$.}
    \vspace{-1.3em}
  \label{fig:MCQCindexConstructed}
 \end{minipage}
 \begin{minipage}[b]{.48\textwidth}
  \centering
  \includegraphics[width=7.8cm]{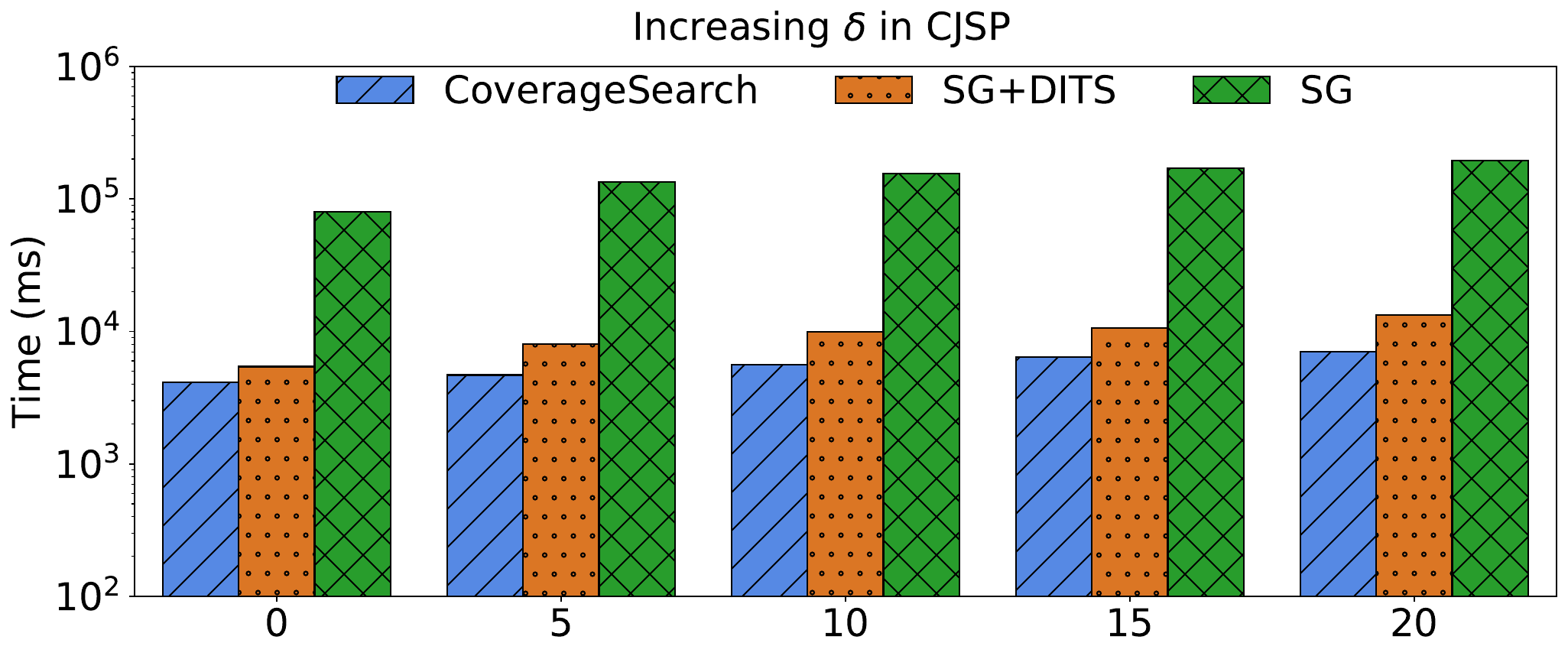}
\caption{Comparison of search time with the increase of $\delta$.}
\vspace{-1.3em}
\label{fig:MCQCdelta}
 \end{minipage}
\end{figure*}

\begin{figure*}[htbp]
\setlength{\abovecaptionskip}{-0.3 cm} 
\setlength{\belowcaptionskip}{-0.2 cm} 
 \begin{minipage}[b]{.48\textwidth}
  \centering
\includegraphics[width=8cm]{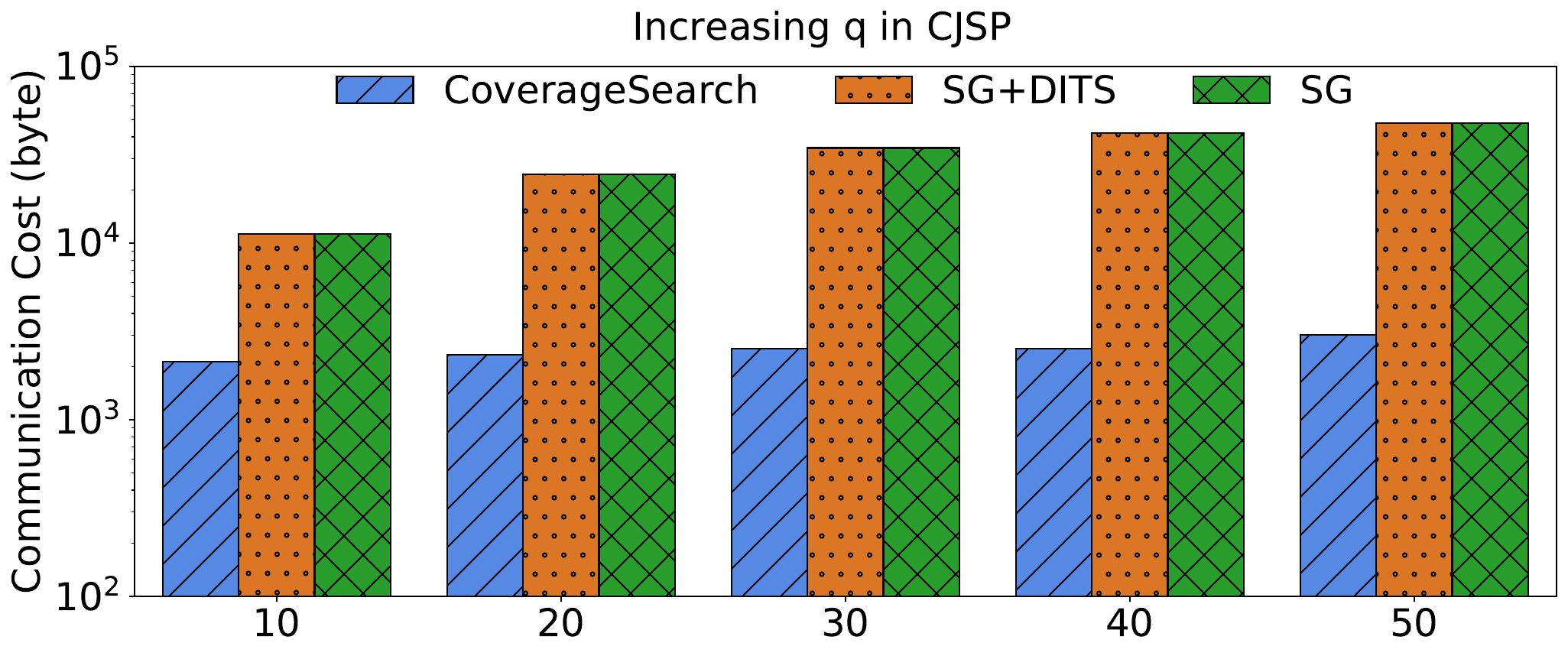}
    \caption{Communication cost with the increase of $q$.}
     \vspace{-1.5em}
  \label{fig:MCQComcost}
 \end{minipage}
 \begin{minipage}[b]{.48\textwidth}
  \centering
  \includegraphics[width=7.8cm]{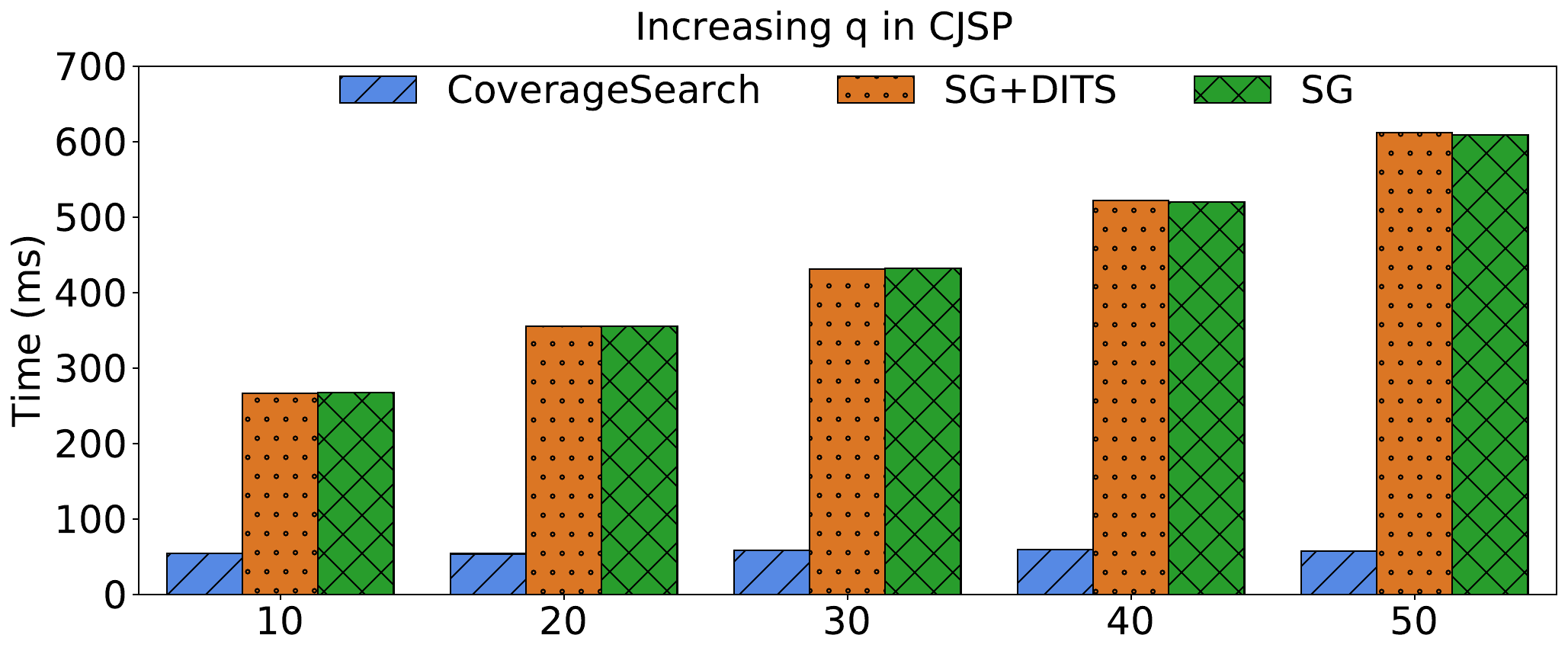}
  \caption{Transmission time with the increase of $q$.}
    \vspace{-1.5em}
 \label{fig:MCPTransmissionTime}
 \end{minipage}

\end{figure*}

\subsubsection{Efficiency of Communication Cost}
We compare the communication cost of \overlapSearch with query distribution strategy compared to other baselines as the $q$ increases.
Fig.~\ref{fig:Comcost} shows the number of bytes transferred as the $q$ increases. We can see that the number of bytes transferred by five algorithms gradually increases as the $q$ increases, wherein \overlapSearch transmits the fewest bytes. This is because \overlapSearch only transmits query region that overlaps with the candidate dataset source. Fig.~\ref{fig:TransmissionTime} shows the transmission time. Since the transmission time is inversely proportional to the network bandwidth when the network bandwidth is constant, the more bytes that are transmitted, the longer the transmission time will be.

\vspace{-0.1cm}
\subsection{Efficiency of Coverage Search and Communication}
\label{sec:effi-MCQC}
\vspace{-0.1cm}
\myparagraph{Methods for Comparisons}
We compare our proposed \coverageSearch algorithm in Section~\ref{sec:MCQCsolution} with two baselines.
\begin{itemize}
 [leftmargin=*] 
    \item \SG~\cite{Hochbaum1998}: 
    It is a standard greedy algorithm for solving \MCP. Here, we extend it to solve our \MCQC, that is, we traverse all datasets in the data source and find the dataset that is directly or indirectly connected with the query and has the maximum marginal gain at each iteration. 
    \item \textsf{SG}+\unifiedIndex: It adopts \localIndex to speed up the search process of finding connected candidate datasets each round, and \globalIndex to find the candidate data sources.
    
\end{itemize}

\subsubsection{Efficiency of Coverage Joinable Search}
We vary key parameters $k$, $\theta$, $q$, and $\delta$ to compare the efficiency of \coverageSearch with two comparison algorithms.

\myparagraph{Efficiency Comparison with Different $k$}
Fig.~\ref{fig:MCQCtopKWI} shows the search time of three algorithms as $k$ increases. We observe that \coverageSearch achieves at most 1.7 and 26.5 times speed up against \textsf{SG}+\unifiedIndex and \textsf{SG}, respectively.  
Because \coverageSearch only needs to search in the index tree once to find the node connected with the query in each iteration, significantly reducing the search time. In addition, we can see that \textsf{SG}+\unifiedIndex is significantly faster than \textsf{SG}, which indicates our local index is very efficient in solving \MCQC. 


\myparagraph{Efficiency Comparison with Different $\theta$}
Fig.~\ref{fig:MCQCtopKR} presents the search time as the $\theta$ increases. We can see that the search time of the three algorithms gradually increases, but \coverageSearch always maintains the best search performance. Moreover, the search time of \SG algorithm increases more rapidly than that of the other two algorithms. This is because the number of elements in each cell-based dataset increases as the $\theta$ increases and pairwise coverage computation and comparison will take more time. 

\myparagraph{Efficiency Comparison with Different $q$}
Fig.~\ref{fig:MCQCindexConstructed} shows the search time as $q$ increases. We can see that \coverageSearch consistently outperforms the other two algorithms. This is because our algorithm only needs to search once in each iteration by merging all results into one large query dataset to find connected datasets, reducing the search time. Moreover, we can see that \textsf{SG}+\unifiedIndex has better performance than \SG, which proves that our lower and upper bounds based on \unifiedIndex have stronger filtering power in finding the candidate datasets that satisfy spatial connectivity.


\myparagraph{Efficiency Comparison with Different $\delta$}
Fig.~\ref{fig:MCQCdelta} shows the search time of three algorithms as $\delta$ increases. We can see that the search time of the three algorithms gradually increases. This is because a higher $\delta$ means that more pairs of datasets satisfy the directly connected relation, resulting in more candidate datasets that satisfy the spatial connectivity, the algorithm needs to find the dataset with the maximum marginal gain by traversing all connected datasets. Moreover, we can see that \coverageSearch consistently outperforms the other two algorithms, which further shows the spatial merge strategy in \coverageSearch is efficient in solving \MCQC.


\subsubsection{Efficiency of Communication Cost}
Figs.~\ref{fig:MCQComcost} and \ref{fig:MCPTransmissionTime} present the communication cost and network transmission time of three algorithms. We can observe that as $q$ increases, the number of bytes transmitted by three algorithms gradually increases. The reason is that \coverageSearch only sends the query that overlaps with the MBR of the root node of the data source each iteration, without sending the entire query dataset.
Correspondingly, since the transfer time is proportional to the bytes transferred, the transfer time of \coverageSearch is more rapid than that of the other two algorithms, as shown in Fig.~\ref{fig:MCPTransmissionTime}.

\vspace{0.4cm}

 \section{Conclusions}
This paper mainly studied joinable search over multiple spatial data sources and formulated two search problems, including overlap joinable search problem (\MIQ) and coverage joinable search problem (\MCQC). To solve them, we first proposed a distributed tree-based spatial index structure \unifiedIndex, which is composed of one centralized global index and distributed local indices–one per data source. Next, based on \unifiedIndex, we designed query distribution strategies to reduce the communication cost. Subsequently, we proposed efficient lower and upper bounds and designed an exact \overlapSearch algorithm to solve \MIQ. Moreover, we also deduced the lower and upper bounds between two index nodes to accelerate the computation of spatial connectivity, and designed an approximate \coverageSearch algorithm to solve \MCQC efficiently. The experimental results showed that our proposed algorithms improved search performance and reduced communication costs. An interesting future research direction is to explore the spatial dataset search based on the data pricing to return the optimal dataset combination.
\newpage


\balance
\bibliographystyle{abbrv}
\bibliography{library}

\preto{\section}{\setcounter{cN}{0}} 


\clearpage
\balance
\section{Appendix}
\label{Appendix1}

\subsection{Notations}
\label{appendix:notations}
Frequently used notations are summarized in Table~\ref{tab:notations}.
\vspace{-0.5cm}
\begin{table}[h]
\centering
\setlength{\abovecaptionskip}{0cm} \caption{Summary of notations.}
 \label{tab:notations}
{ \begin{tabular}{cc}  
\toprule   
  \textbf{Notation} & \textbf{Description}  \\  
\midrule   
$\mathcal{D}$ & a data source consisting a set of spatial datasets\\
$\mathcal{S}_\mathcal{D}$ & a collection of cell-based datasets in $\mathcal{D}$\\
$D$ & a spatial dataset consisting of a set of points\\
$Q$, $S_Q$ & a query dataset, a cell-based query dataset\\
$D$, $S_D$ & a spatial dataset, a cell-based dataset\\
$n$ & the number of datasets in $\mathcal{D}$\\
$\theta$ & the resolution in the grid partition\\
$m$ & the number of data sources\\
$h, w$ & the height and width of the whole 2D space\\
$\mu, \nu$ & the height and width of each cell in the grid\\
$\mathcal{R}$ & the collection of result datasets \\
$2^\theta$ & the number of entries in each dimension\\
$N_Q$, $N_D$ & the query node and dataset node  \\
$N_{root}$, $N_{in}$ & the root node and internal node\\
$dist(S_{D},S_{D'})$ &the distance between $S_{D}$ and $S_{D'}$\\

  \bottomrule  
\end{tabular}}
\end{table}

\vspace{-0.2cm}
\subsection{NP-hardness of \MCQC}
\label{appendix:NPhard}
\begin{proof}
To prove \MCQC is NP-hard, we show that any \MCP instance can be reduced to an \MCQC instance in polynomial time. Considering a collection of sets $\mathcal{A}$ = $\{A_1, \dots, A_{|\mathcal{A}|}\}$ ($\mathcal{U}$ = $\cup_{A_i \in \mathcal{A}} A_i$) and a positive integer $k$, \MCP aims to find a subset $\mathcal{A}^* \subseteq \mathcal{A}$ such that $|\mathcal{A}^*| \leq k$ and $|\cup_{A_i \in \mathcal{A}^*} A_i|$ is maximized. We show that \MCP can be viewed as a special case of \MCQC. Given an arbitrary instance of \MCP, we sort the universe $\mathcal{U}$ according to the lexicographic order, then create a mapping from the sorted universe $\mathcal{U} = \{u_1, \dots, u_{|\mathcal{U}|}\}$ to an integer set $\mathcal{I} = \{0, 1, \dots, |\mathcal{U}-1|\}$. We can construct a $2^{\theta}\times 2^{\theta}$ grid in the space such that $2^{\theta}\times 2^{\theta} > |\mathcal{U}|$.

Then, we set the connectivity threshold $\delta = 2^{\theta}\sqrt{2}$, the query dataset $A_Q$ = $\{0,\dots, 2^{\theta}\times 2^{\theta}-1\}\backslash \mathcal{I}$. The set $\mathcal{A} \cup \{A_Q\}$ must satisfy the spatial connectivity. This is because $dist(A_i, A_j) (A_i, A_j \in \mathcal{A} \cup \{A_Q\})$ must not exceed $\delta$, ensuring the connectivity constraint is always satisfied. Moreover, it is evident that the total reduction is performed in polynomial time. Thus, the optimal solution for \MCQC must also be optimal for the corresponding \MCP. If \MCQC is not NP-hard, \MCP must not be NP-hard since any instance of \MCP can be converted to an instance of \MCQC. This contradicts the fact that \MCP is NP-hard \cite{Hochbaum1998}. Therefore, \MCQC is NP-hard.
\end{proof}


\vspace{-0.2cm}
\subsection{Index Update}
\label{appendix:indexUpdation}
As described in Section~\ref{sec:offline}, our local index is a bidirectional pointer structure, where each dataset node $N_D$ not only contains the pointer of child nodes $N_D.ch$, but also contains the pointer of parent node $N_D.pa$. Thus it can quickly support index updates without rebuilding the entire index. Since \globalIndex and \localIndex are similar structures, we take \localIndex as an example to illustrate the index update process, including dataset inserts, updates, and deletes.

Firstly, to insert a new node $N_D$, we need to add the dataset node to the appropriate leaf node. We first find the node $N_{in}$ with the minimum distance $||N_{in}.o, N_D.o||_2$ in each layer of the tree, and traversal continues down to the child nodes. When the tree node is a leaf node, we insert $N_D$ into the leaf node. If the leaf node capacity is greater than $f$, we need to split the node according to Algorithm~\ref{alg:indexConstructure}. Finally, we iteratively update the parent node information to reflect the properties of the new child nodes accurately.

Secondly, to update the existing dataset node $N_D$, we can find the original dataset node in the tree according to the unique dataset identifier $N_D.id$. Subsequently, we replace the original node $N_D$ with the updated dataset node $N_D'$ and iteratively update $(rect, o, r, N_l, N_r, pa)$ of the parent node from the bottom up until the parent node contains the child node completely. Finally, for a dataset node $N_D$ that needs to be deleted, we can consider it a special case of a dataset node update. We directly find and remove it from the leaf node and update the information of the parent node.

\subsection{Complexity Analysis}
\label{appendix:indexComplexity}

We use $n$ to represent the number of datasets in the data source and $m$ to represent the number of data sources.

\myparagraph{Time Complexity Analysis of Constructing \unifiedIndex}
Firstly, we analyze the time complexity of constructing \unifiedIndex, which comprises the construction of \globalIndex and \localIndex. 
In constructing \localIndex for each data source, Algorithm~\ref{alg:indexConstructure} splits the dataset nodes into two child sets from which trees are recursively built. Consequently, during each split process, it takes $O(n)$ time to split all dataset nodes, and the height of the tree is $\log{n}$. In addition, \localIndex needs to build the inverted index for each leaf node, resulting in a time complexity of $O(n|S_D|)$, where $|S_D|$ denotes the average size of the cell IDs contained by the dataset node $N_D$. Thus, the total time complexity of constructing \localIndex is $O(n\log{n}+n|S_D|)$.

In contrast, the construction process for \globalIndex is similar to that of \localIndex, resulting in a time complexity of $O(m\log{m})$. Since the number of data sources $m$ and the number of cell IDs in each dataset $|S_D|$ are significantly smaller than $n$, they can be considered negligible in comparison to $n$. Thus, the total time complexity of \unifiedIndex is $O(m\log{m}+ m(n\log{n}+n|S_D|))) = O(n\log{n})$.


\myparagraph{Space Complexity Analysis of Constructing \unifiedIndex}
Next, we analyse the time complexity of Algorithm~\ref{alg:indexConstructure} for constructing \unifiedIndex.
The space complexity primarily depends on the number of index nodes in \unifiedIndex. Firstly, in \localIndex of each data source, the height of the tree is $\log{n}$, resulting in a total number of index nodes given by the series $1+2+4+\dots+n = 2n$. Thus, the number of index nodes can be approximated as $O(n)$. Similarly, for \globalIndex with height $\log{m}$, the number of index nodes is $O(m)$. Assuming that the space required for each index node is $|size|$, and given that the number of data sources $m$ and $|size|$ are significantly smaller than $n$, they can be considered negligible in comparison. Thus, the total space complexity of \unifiedIndex is $O(mn|size| + m|size|) = O(n)$.

\myparagraph{Time Complexity Analysis of \overlapSearch} 
The time complexity of \overlapSearch mainly depends on the index structure and pruning strategy. In the worst-case scenario, the search may need to traverse all nodes of the tree. Thus, the time complexity of \overlapSearch is $O(n)$. However, due to the pruning strategy, the actual number of nodes accessed is usually much less than $n$. Thus, the time complexity is much lower than $O(n)$ in the actual situation.

\myparagraph{Time Complexity Analysis of \coverageSearch}
Here, we show the detailed time complexity analysis of \coverageSearch.
Algorithm~\ref{alg:MergeGreedy} shows that we iteratively choose the dataset with the maximum marginal gain and satisfy the connectivity. 
In the worst-case scenario, the algorithm may need to visit $n$ index nodes. Since the algorithm requires $k$ iterations,  the total time complexity is $O(kn)$. With the help of the acceleration strategy, the time complexity is much lower than $O(kn)$ in the actual situation.



\subsection{Approximation Guarantee of \coverageSearch}
\label{appendix:approximation}

In the following, we provide the approximation guarantee analysis of \coverageSearch, which achieves (1-1/e)-approximation under the condition that at the start of each iteration, the current result set can connect with at least one set in the optimal solution, and the marginal gain brought by the set is above the average value of all the gains brought by the optimal solution. The detailed  analysis is as follows.

First, we use $S_Q$ to denote the query set, $\mathcal{S}_\mathcal{D} = \{S_{D_1}, S_{D_2}, \dots, S_{D_n}\}$ to denote $n$ spatial datasets. Let $\mathcal{U} = (\cup_{i=1}^n S_{D_i})\cup S_Q$ be the set of all set elements, $\mathcal{R} = \{S_Q, S_{D_1}', \dots, S_{D_k}'\}$ denote the output of result set after $k$ iterations in our Algorithm~\ref{alg:MergeGreedy}, $C_i = (\cup_{j=1}^i S_{D_i}') \cup S_Q$ denote the set of elements covered by the end of iteration $i$, and $V = U - C_{i-1}$ denote the set of elements which are not in $C_{i-1}$. In addition, let the optimal solution be $\{S_Q, S_{D_1}^*, \dots, S_{D_k}^*\}$, and $CS_k = \cup_{j=1}^k S_{D_j}^*$.
Then we have the following Lemma~\ref{lemma:prooflabel1}. 
\begin{lemma1}
\label{lemma:prooflabel1} 
    At the start of each iteration $i$, if the $C_{i-1}$ is connected with at least one optimal set $S^* \in \{S_{D_1}^*, \dots, S_{D_k}^*\}$, where $|S^* \cap V| \geq \frac{1}{k} \sum_{j=1}^{k}|S_{D_j}^*\cap V|$, then we have
    \begin{small}
    \begin{equation}
        |S_{D_i}' \cap V| = |C_i| - |C_{i-1}| \geq \frac{|OPT| - |C_{i-1}|}{k}.
    \end{equation}
    \end{small}
\end{lemma1}
\begin{proof}
\renewcommand{\qedsymbol}{}
Let $OPT$ be the union of $CS_k$ and $S_Q$, i.e., $OPT = CS_k \cup S_Q$, and $|OPT|$ denote the number of elements in $OPT$. 
Thus, at the start of iteration $i$, the number of elements covered in $OPT$ but not in $C_{i-1}$ is 
\begin{small}
\begin{equation}
    |OPT - C_{i-1}| = |OPT| - |OPT \cap C_{i-1}|
    \geq  |OPT| - |C_{i-1}|.
\end{equation}
\end{small}
Moreover, $|(CS_k \cup S_Q) \cap V| = |OPT - C_{i-1}|$, that is
\begin{small}
\begin{equation}
\label{eq:setequ}
\begin{aligned}
  |(CS_k  \cup S_Q) \cap V|  &= |OPT \cap V| \\
  & = |OPT \cap (U-C_{i-1})| \\
  &= |(OPT \cap U) - (OPT \cap C_{i-1}) | \\
  &= |OPT| - |(OPT \cap C_{i-1})| \\
  &= |OPT - C_{i-1}|.
\end{aligned}
\end{equation}
 \end{small}
Therefore, we obtain that 
 \begin{small}
\begin{equation}
\begin{aligned}
  |(CS_k \cup S_Q) \cap V| \geq |OPT| - |C_{i-1}|.
\end{aligned}
\end{equation}
 \end{small}
In addition, we also have 
 \begin{small}
\begin{equation}
\label{eq:setDifference}
\begin{aligned}
  |(CS_k  \cup S_Q) \cap V| &= |(CS_k \cap V) \cup (S_Q \cap V)| \\
  &= |(CS_k \cap V)|.
\end{aligned}
\end{equation}
\end{small}
Then we derive that 
\begin{small}
\begin{equation}
\label{eq:set_OPT_Ci_1}
\begin{aligned}
  |(CS_k \cap V)| & \geq |OPT| - |C_{i-1}|.
\end{aligned}
\end{equation}
\end{small}
Dividing the two sides of Inequality~(\ref{eq:set_OPT_Ci_1}) by $k$, we obtain that 
 \begin{small}
\begin{equation}
\label{eq:set13}
\begin{aligned}
  \frac{|(CS_k \cap V)|}{k} &= \frac{|(\cup_{j=1}^k S_{D_j}^*) \cap V|}{k} \\
  & = \frac{|\cup_{j=1}^k (S_{D_j}^* \cap V)| }{k} \\
  &\geq \frac{|OPT| - |C_{i-1}|}{k}.
\end{aligned}
\end{equation}
\end{small}

We know that 
 \begin{small}
\begin{equation}
\label{eq:set14}
\begin{aligned}
  \sum_{j=1}^{k} |S_{D_j}^* \cap V| \geq |\cup_{j=1}^k (S_{D_j}^* \cap V)|. 
\end{aligned}
\end{equation}
\end{small}

By Inequalities~\ref{eq:set13} and \ref{eq:set14}, we obtain that
 \begin{small}
\begin{equation}
\label{eq:set15}
\begin{aligned}
  \frac{\sum_{j=1}^{k} |S_{D_j}^* \cap V|}{k} \geq \frac{ |\cup_{j=1}^k (S_{D_j}^* \cap V)| }{k} \geq \frac{|OPT| - |C_{i-1}|}{k}.
\end{aligned}
\end{equation}
\end{small}

We know that at least one optimal set $S^* \in \{S_{D_1}^*, \dots, S_{D_k}^*\}$ is connected with $C_{i-1}$, where
 \begin{small}
\begin{equation}
\label{eq:set16-1}
\begin{aligned}
    |S^* \cap V| \geq \frac{\sum_{j=1}^{k}|S_{D_j}^* \cap V|}{k}.
\end{aligned}
\end{equation}
\end{small}

Combining Iequalities~\ref{eq:set15} and \ref{eq:set16-1}, we obtain that
 \begin{small}
\begin{equation}
\label{eq:set17}
\begin{aligned}
  |S^* \cap V| \geq \frac{|OPT| - |C_{i-1}|}{k}.
\end{aligned}
\end{equation}
\end{small}

In addition, based on Lines~\ref{alg:findBest} to \ref{alg:MGCG-addBest} of Algorithm~\ref{alg:MergeGreedy}, we find the spatial dataset $S_{D_i}'$ with the maximum marginal gain in the iteration $i$. That is 
\begin{small}
\begin{equation}
\label{eq:set18}
    |S_{D_i}' \cap V| \geq |S^* \cap V|.
\end{equation}
\end{small}

Thus, by Inequalities~\ref{eq:set17} and \ref{eq:set18}, we conclude that: for all $ i = \{1, 2, \dots k\}$,
\begin{small}
\begin{equation}
\label{eq:set19}
|S_{D_i}' \cap V| = |C_i| - |C_{i-1}| \geq |S^* \cap V| \geq  \frac{|OPT| - |C_{i-1}|}{k}.
\end{equation}
\end{small}
\end{proof}
\vspace{-0.8cm}

\begin{lemma1}
\label{eq:setDifference2}
     For all $i = \{1, 2,\dots, k\}, |C_i| \geq \frac{|OPT|}{k}\sum_{j=0}^{i-1}(1 - 1/k)^j + (1-1/k)^i|S_Q|$.
\end{lemma1}
\begin{proof} 

We use induction reasoning to prove the correctness of Lemma~\ref{eq:setDifference2}. For convenience, we let $o=|OPT|/k$ and $t=(1-1/k)$.
At the start of our algorithm, the set of covered elements can be considered as $S_Q$, i.e. $C_0 = S_Q$.
Firstly, the base case $i=1$ is trivial as the first choice $|C_1|$ has at least $\frac{|OPT|}{k} + (1-1/k)|S_Q|$ elements by Lemma \ref{lemma:prooflabel1}.
Then we suppose inductive step $i$ holds, and prove that it holds for $i+1$. By Lemma~\ref{lemma:prooflabel1}, we obtain that
\begin{small}
\begin{equation}
\label{eq:set20}
\begin{aligned}
  |C_{i+1}| &\geq |C_i| + \frac{|OPT|-|C_i|}{k}\\
  &= o + t|C_i| \\
  &\geq o + t(o\sum_{j=0}^{i-1}t^j + t^i|S_Q|) \\
  &= ot^0 + o\sum_{j=1}^{i}t^j + t^{i+1}|S_Q| \\
  &= o\sum_{j=0}^{i}t^j + t^{i+1}|S_Q| \\
  &= \frac{|OPT|}{k}\sum_{j=0}^{i}(1-1/k)^j + (1-1/k)^{i+1}|S_Q|.
\end{aligned}
\end{equation}
\end{small}

The first inequality is by Lemma \ref{lemma:prooflabel1}, and the last inequality is from the inductive hypothesis.
\end{proof}

\noindent Finally, we have the following theorem and prove it.
\begin{theorem}
\label{theorem:coverage}
     At the start of each iteration $i$, if the $C_{i-1}$ is connected with at least one optimal set $S^* \in \{S_{D_1}^*, \dots, S_{D_k}^*\}$, where $|S^* \cap V| \geq \frac{1}{k} \sum_{j=1}^{k}|S_{D_j}^* \cap V|$, then Algorithm~\ref{alg:MergeGreedy} is a (1-1/e)-approximation algorithm for \MCQC.
\end{theorem}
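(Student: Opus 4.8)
The plan is to derive the theorem directly from Lemma~\ref{eq:setDifference2} by instantiating it at the final iteration $i = k$. Since the hypothesis of the theorem---that at the start of every iteration the current cover $C_{i-1}$ connects to an optimal set whose marginal gain is at least the average $\frac{1}{k}\sum_{j=1}^{k}|S_j^*\cap V|$---is precisely the precondition required by Lemma~\ref{lemma:prooflabel1}, and Lemma~\ref{eq:setDifference2} is proved by induction on top of Lemma~\ref{lemma:prooflabel1}, I would first observe that both lemmas apply at every step $1,\dots,k$. This reduces the theorem to a closed-form evaluation of the bound in Lemma~\ref{eq:setDifference2} at $i=k$.

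Second, I would evaluate the geometric series. Setting $i=k$ in Lemma~\ref{eq:setDifference2} gives $|C_k| \ge \frac{|OPT|}{k}\sum_{j=0}^{k-1}(1-1/k)^j + (1-1/k)^k|S_Q|$. The finite geometric sum equals $k\bigl(1-(1-1/k)^k\bigr)$, so the first term collapses to $|OPT|\bigl(1-(1-1/k)^k\bigr)$. Since $|S_Q|\ge 0$, the remaining term is nonnegative and can be dropped, yielding $|C_k| \ge |OPT|\bigl(1-(1-1/k)^k\bigr)$. Finally, applying the standard inequality $(1-1/k)^k \le 1/e$ gives $1-(1-1/k)^k \ge 1-1/e$, hence $|C_k| \ge (1-1/e)|OPT|$. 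Because $C_k$ is exactly the set of elements covered by the output $\mathcal{R}$ of Algorithm~\ref{alg:MergeGreedy} and $|OPT|$ is the optimum coverage value, this establishes the desired $(1-1/e)$-approximation.

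The genuinely hard work has already been discharged in Lemma~\ref{lemma:prooflabel1} (the per-iteration marginal-gain bound) and Lemma~\ref{eq:setDifference2} (its induction), so the theorem itself is a short calculation; the only points demanding care are verifying that the connectivity-and-average-gain hypothesis indeed holds at each of the $k$ iterations so that the two lemmas may be chained, and observing that discarding the $(1-1/k)^k|S_Q|$ term is legitimate precisely because it is nonnegative---in fact it only strengthens the guarantee, reflecting that the query set $S_Q$ is covered for free at the outset.
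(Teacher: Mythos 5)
Your proposal is correct and takes essentially the same route as the paper's own proof: instantiate Lemma~\ref{eq:setDifference2} at $i=k$, collapse the geometric sum to $|OPT|\bigl(1-(1-1/k)^k\bigr)$, drop the nonnegative $(1-1/k)^k|S_Q|$ term, and apply $(1-1/k)^k \le 1/e$ (the paper justifies this via $1+x \le e^x$) to conclude $|C_k| \ge (1-1/e)|OPT|$. There is no gap; your added remark that the theorem's connectivity-and-average-gain hypothesis is exactly what licenses chaining Lemma~\ref{lemma:prooflabel1} through all $k$ iterations matches the paper's implicit use of that hypothesis.
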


\begin{proof}
Based on Lemma~\ref{eq:setDifference2}, we obtain that 
\begin{small}
\begin{equation}
\label{eq:set21}
\begin{aligned}
  |C_k| &\geq \frac{|OPT|}{k}\sum_{j=0}^{k-1}(1 - 1/k)^j + (1-1/k)^k|S_Q|\\
  &= \frac{|OPT|}{k}\times \frac{1-(1-1/k)^k}{1-(1-(1-1/k))} + (1-1/k)^k|S_Q|\\
  &= |OPT|(1-(1-1/k)^k) + (1-1/k)^k|S_Q| \\
  &\geq |OPT|(1-(1-1/k)^k).
\end{aligned}
\end{equation}
\end{small}

 Moreover, we know that $1+x \leq e^x$ for all $x \in \mathbb{R}$, which means $(1-1/k)^k \leq (e^{-1/k})^k = 1/e$.
 Then $1-(1-1/k)^k \geq 1-1/e$. Thus, we can derive that
\begin{small}
\begin{equation}
\begin{aligned}
|C_k| \geq (1-1/e)|OPT|,
\end{aligned}
\end{equation}
\end{small}
which proves that our greedy algorithm provides a (1-1/e)-approximation.
\end{proof}

\subsection{Data Source Details}
\label{appendix:dataset}
Here, we present the details of five data sources.
\begin{itemize}
[leftmargin=*]
\item[$\bullet$] \Baidu\footnotemark  is collected from the Baidu Maps Open Platform, which contains spatial datasets from different industry categories for 28 cities in China.
\item[$\bullet$] \BTAA\footnotemark is collected from the Big Ten Academic Alliance Geoportal, which contains spatial data for the midwestern US, such as Illinois, Indiana, and Michigan.
\item[$\bullet$] \NYU\footnotemark is collected from the NYU Spatial Data Repository, which contains geographic information on multi-subjects like census and transportation worldwide.
\item[$\bullet$] \Transit\footnotemark is collected from Big Geoportal, which contains different kinds of traffic data from Maryland and Washington D.C., such as buses, metro, and waterways.
\item[$\bullet$] \UMN\footnotemark is collected from the data repository of Minnesota University, which contains geographic information from various topics like agriculture and ecology.
\end{itemize}

\subsection{Additional Experimental Results}
\label{appendix:Experiment}
\myparagraph{Index Update Time Comparison}
we conduct experiments to investigate the performance of index updates, including batch inserts and updates of datasets. Specifically, we set the number of updated or inserted dataset nodes $\beta = 100, 150, 200, 250, 300$. We compare the update time of the five indexes with the increase in the number of node updates or insertions. We do not show experiments on the deletion of dataset nodes here, because the deletion of dataset nodes can be seen as a special case of the dataset node updates.

Fig.~\ref{fig:nodeInsert} shows the comparison of index updating time for five indexes as the dataset insertions increase. We observe that the index updating of \unifiedIndex is slower than \STS but faster than the others. This is because compared with \STS, \unifiedIndex needs to update the tree structure according to the dataset insertion strategies described in Appendix~\ref{appendix:indexUpdation}. However, since \unifiedIndex is a bidirectional pointer structure, it is faster in inserting than \quadtree and \Rtree. Additionally, we can also see that \Josie's index updating is consistently slower than the other indexes. This is primarily because \Josie takes more time in sorting the elements in each set and inserting the dataset into the posting list in sorted order.

Fig.~\ref{fig:nodeUpdate} compares the index updating time of five indexes as the number of dataset updates increases. We can observe that the index update time of \STS is still the fastest due to it can quickly update the dataset ID information in the corresponding posting list by traversing the updated cells. In contrast, \Josie is the second-fast. This is because the original dataset has already been sorted according to the dataset ID, and compared to \STS, \Josie also needs to update the sorted position information of the cell ID in the dataset. In addition, \unifiedIndex is ranked as the third, since it requires updating the tree structure in addition to the inverted index information. Similarly, \unifiedIndex has a faster index update time than \Rtree with the help of the bidirectional pointer structure.  In contrast, \quadtree has the slowest index update time because it has to repeatedly find the updated cell ID for insertion and deletion.

\begin{figure}[htbp]
\vspace{-0.4cm}
\setlength{\abovecaptionskip}{-0.3 cm} 
\setlength{\belowcaptionskip}{-0.7 cm} 
\centerline{\includegraphics[width=8cm]{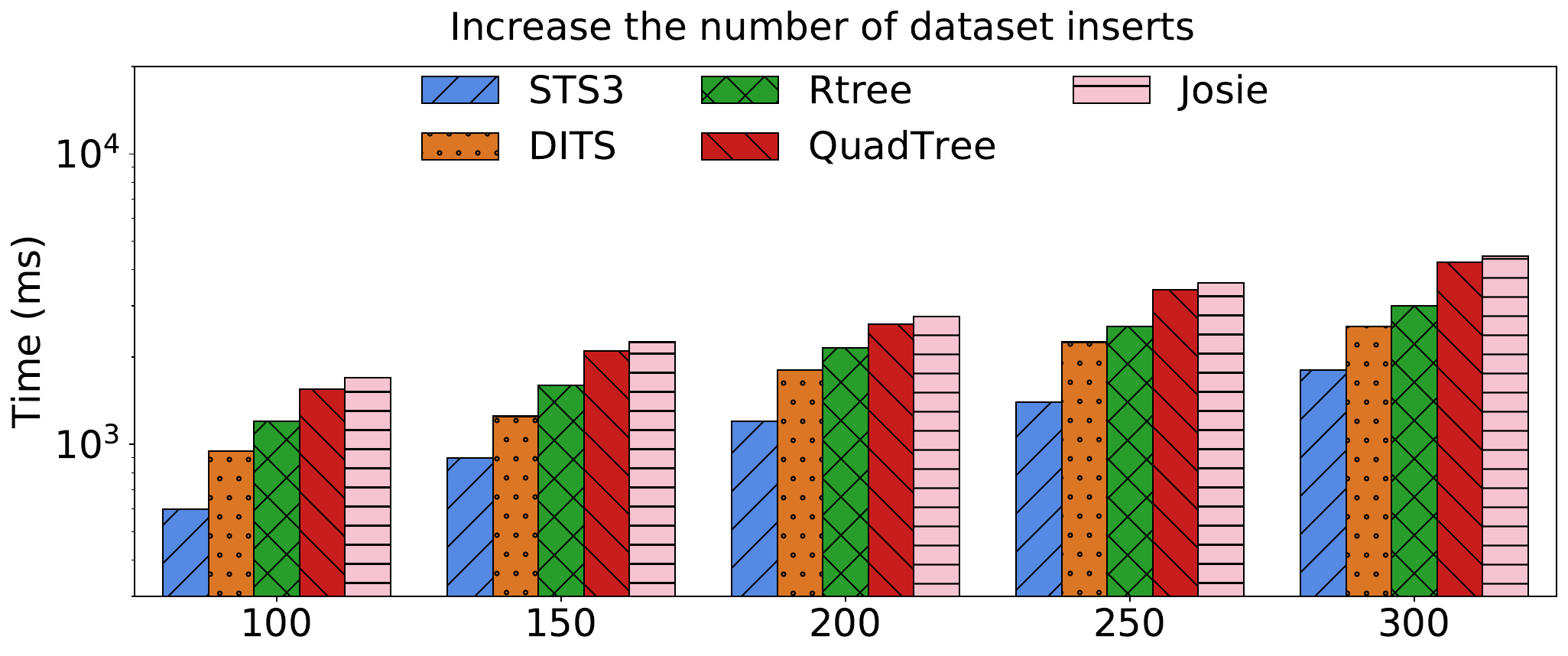}}
\caption{Index updating time comparison as the dataset insertions increase.}
\label{fig:nodeInsert}
\end{figure}

\begin{figure}[htbp]
\vspace{-0.6cm}
\setlength{\abovecaptionskip}{-0.3 cm} 
\setlength{\belowcaptionskip}{-0.7 cm} 
\centerline{\includegraphics[width=8cm]{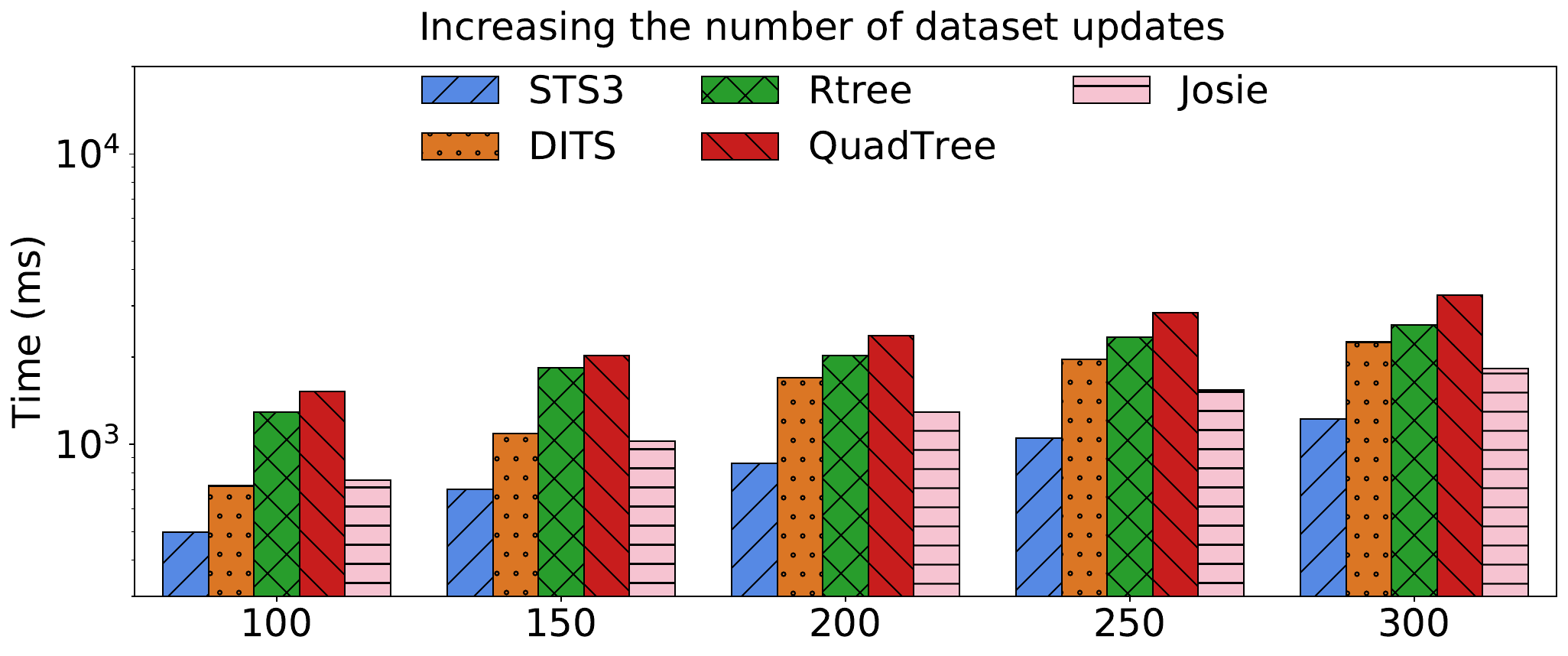}}
\caption{Index updating time comparison as the dataset updates increase.}
\label{fig:nodeUpdate}
\end{figure}

\footnotetext[2]{\url{https://lbsyun.baidu.com/}}
\footnotetext[3]{\url{https://geo.btaa.org/}}
\footnotetext[4]{\url{https://geo.nyu.edu/}}
\footnotetext[5]{\url{https://geo.btaa.org/}}
\footnotetext[6]{\url{ https://conservancy.umn.edu/drum}}


\end{document}